\def\mc{\mathcal}
\begin{document}
\title{\LARGE \bf Negative Imaginary State Feedback Equivalence for Systems of Relative Degree One and Relative Degree Two
}

\author{Kanghong Shi,$\quad$Ian R. Petersen, \IEEEmembership{Fellow, IEEE},$\quad$and$\quad$Igor G. Vladimirov 
\thanks{This work was supported by the Australian Research Council under grant DP190102158.}
\thanks{K. Shi, I. R. Petersen and I. G. Vladimirov are with the School of Engineering, College of Engineering and Computer Science, Australian National University, Canberra, Acton, ACT 2601, Australia.
        {\tt kanghong.shi@anu.edu.au}, {\tt ian.petersen@anu.edu.au}, {\tt igor.vladimirov@anu.edu.au}.}%
}

\newtheorem{definition}{Definition}
\newtheorem{theorem}{Theorem}
\newtheorem{conjecture}{Conjecture}
\newtheorem{lemma}{Lemma}
\newtheorem{remark}{Remark}
\newtheorem{corollary}{Corollary}
\newtheorem{assumption}{Assumption}

\maketitle
\thispagestyle{plain}
\pagestyle{plain}

\begin{abstract}
This paper presents necessary and sufficient conditions under which a linear system of relative degree either one or two is state feedback equivalent to a negative imaginary (NI) system. More precisely, we show for a class of linear time-invariant strictly proper systems, that such a system can be rendered minimal and NI using full state feedback if and only if it is controllable and weakly minimum phase. A strongly strict negative imaginary state feedback equivalence result is also provided. The NI state feedback equivalence result is then applied in a robust stabilization problem for an uncertain system with a strictly negative imaginary uncertainty.
\end{abstract}

\begin{keywords}
Negative imaginary systems, feedback equivalence, stabilization, robust control.
\end{keywords}

\section{INTRODUCTION}
The negative imaginary (NI) systems theory was established in \cite{lanzon2008stability,petersen2010feedback,xiong2010negative} to address robust control problems for systems with colocated force actuators and position sensors. Motivated by the control of flexible structures \cite{preumont2018vibration,halim2001spatial,pota2002resonant}, negative imaginary systems theory has been applied in many fields. For example, it has achieved success in the control of lightly damped structures \cite{cai2010stability,rahman2015design,bhikkaji2011negative} and nano-positioning \cite{mabrok2013spectral,das2014mimo,das2014resonant,das2015multivariable}. NI systems theory provides an important alternative to positive real (PR) systems theory \cite{brogliato2007dissipative} in order to achieve robust stability when the output of a mechanical system is position rather than velocity. Unlike PR systems theory which uses negative velocity feedback control, NI systems can be controlled using positive position feedback. One advantage of NI systems theory is that it can deal with systems of relative degrees zero, one  or two, while PR systems theory can only deal with systems of relative degrees zero or one \cite{brogliato2007dissipative}.

Roughly speaking, a square transfer matrix is NI if it is stable and its Hermitian imaginary part is negative semidefinite for all frequencies $\omega \geq 0$. In particular, the frequency response of a single-input single-output (SISO) NI system has a phase lag between $0$ to $-\pi$ radians for all frequencies $\omega>0$. The negative imaginary lemma states that a system is NI if it is dissipative, with its supply rate being the inner product of its input and the derivative of its output \cite{xiong2010negative,song2012negative,mabrok2020dissipativity}. Also note that for the positive feedback interconnection of an NI system $R(s)$ and a strictly negative imaginary (SNI) system $R_s(s)$ with $R(\infty)R_s(\infty)=0$ and $R_s(\infty)\geq 0$, internal stability is achieved if and only if the DC loop gain of the interconnection is strictly less than unity; i.e., $\lambda_{max}(R(0)R_s(0))<1$ (see \cite{lanzon2008stability}).

Feedback stabilization problems have been addressed in many papers using the  PR systems theory (see \cite{kokotovic1989positive,saberi1990global}, etc). In these papers, a system with a specified nonlinearity is stabilized by a state feedback control law that renders the linear part of the system PR. The essence of these papers is deriving conditions for such PR state feedback equivalence, based on which stabilization results can then be achieved. For example, \cite{saberi1990global} renders a linear system PR and this result is then generalized to nonlinear systems in \cite{byrnes1991passivity} using passivity theory. Further nonlinear generalizations of these ideas are presented in the papers \cite{byrnes1991asymptotic,santosuosso1997passivity,lin1995feedback,jiang1996passification}. However, because of the limitations of passivity and PR systems theory, the systems investigated in these papers are only allowed to have relative degree one. This rules out a wide variety of control systems with relative degree two, such as mechanical systems with force actuators and position sensors. Therefore, we seek to solve the problem of state feedback equivalence to an NI system, for systems of relative degree one or two, to complement the existing results that are based on passivity and PR systems theory.

In this paper, we investigate the NI  state feedback equivalence problem for systems of relative degree one and relative degree two. A system with no zero at the origin and of relative degree one or two can be made minimal and NI via the use of state feedback if and only if it is controllable and weakly minimum phase (see for example \cite{khalil2002nonlinear} and \cite{isidori2013nonlinear} for details about feedback linearization). In particular, a controllable system of relative degree one is state feedback equivalent to a strongly strict negative imaginary (SSNI) system if and only if it is minimum phase. The proposed NI state feedback equivalence results are then applied to a robust stabilization problem for an uncertain system with SNI uncertainty.

In addition to complementing the existing feedback equivalence results to allow for relative degree two, the present research provides alternative feedback equivalence results for systems of relative degree one based on NI systems theory, as NI systems arise naturally in a wide variety of applications \cite{petersen2016negative}. This work enables NI systems theory to be applied to a broader class of systems of relative degree one or two when full state information is available.

This paper is organised as follows: Section \ref{section:pre} provides the essential background on NI systems theory. Section \ref{section:problem statement} defines the class of systems under consideration and states the objectives of this paper. The problem is separated into the relative degree one and relative degree two cases. We present in Lemmas \ref{thm:rd1 feedback NI} and \ref{thm:rd2 feedback NI} necessary and sufficient conditions under which there exist state feedback matrices that render the system NI. Formulas for the required state feedback matrices are provided in the proofs. Also, for the special cases when the internal dynamics have zero dimension, we show that there always exist state feedback matrices that make the system NI. The main result of this paper is presented in Theorem \ref{thm:main}, which combines the NI feedback equivalence results of both the relative degree one and relative degree two cases. In Section \ref{section:SSNI}, an SSNI state feedback equivalence result is also provided for systems of relative degree one while it is explained that systems with relative degree two can never be rendered SSNI via state feedback. Section \ref{section:synthesis} applies the NI state feedback equivalence results presented in Section \ref{section:problem statement} in stabilizing an uncertain system with SNI uncertainty. Section \ref{section:example} illustrates the presented results with a numerical example. Section \ref{section:conclusion} concludes the paper and discusses possible future work.
 
\textbf{Notation}: The notation in this paper is standard. $\mathbb R$ and $\mathbb C$ denote the fields of real and complex numbers, respectively. $j\mathbb R$ denotes the set of purely imaginary numbers. $\mathbb R^{m\times n}$ and $\mathbb C^{m\times n}$ denote the spaces of real and complex matrices of dimension $m\times n$, respectively. $\mathfrak{R} [\cdot]$ is the real part of a complex number. $A^T$ denotes the transpose of a matrix $A$. $A^{-T}$ denotes the transpose of the inverse of $A$; i.e., $A^{-T}=(A^{-1})^T=(A^T)^{-1}$. $\det (A)$ denotes the determinant of $A$. $ker(A)$ denotes the kernel of $A$. $spec(A)$ denotes the spectrum of $A$. $\lambda_{max}(A)$ denotes the largest eigenvalue of a matrix $A$ with real spectrum. For a symmetric matrix $P$, $P>0\ (P\geq 0)$ denotes the fact that the matrix $P$ is positive definite (positive semidefinite) and $P<0\ (P\leq 0)$ denotes the fact that the matrix $P$ is negative definite (negative semidefinite). For a positive definite matrix $P$, we denote by $P^{\frac{1}{2}}$ the unique positive definite square root of $P$. $OLHP$ and $CLHP$ are the open and closed left half-planes of the complex plane, respectively.

\section{PRELIMINARIES}\label{section:pre}

\begin{definition}(Negative Imaginary Systems)\cite{xiong2010negative}
A square real-rational proper transfer function matrix $R(s)$ is said to be negative imaginary (NI) if:

1. $R(s)$ has no poles at the origin and in $\mathfrak R[s]>0$;

2. $j[R(j\omega)-R^*(j\omega)]\geq 0$ for all $\omega \in (0,\infty)$ except for values of $\omega$ where $j\omega$ is a pole of $R(s)$;

3. if $j\omega_0$ with $\omega_0\in (0,\infty)$ is a pole of $R(s)$, then it is a simple pole and the residue matrix $K_0=\lim_{s\to j\omega_0}(s-j\omega_0)jR(s)$ is Hermitian and positive semidefinite.
\end{definition}

\begin{definition}(Strictly Negative Imaginary Systems)\cite{xiong2010negative}
A square real-rational proper transfer function $R(s)$ is said to be strictly negative imaginary (SNI) if the following conditions are satisfied:

1. $R(s)$ has no poles in $\mathfrak R[s]\geq 0$;

2. $j[R(j\omega)-R^*(j\omega)]> 0$ for all $\omega \in (0,\infty)$.	
\end{definition}

\begin{definition}(Strongly Strictly Negative Imaginary Systems)\cite{lanzon2011strongly}
A square real-rational proper transfer function matrix $R(s)$ is said to be strongly strictly negative imaginary (SSNI) if the following conditions are satisfied:

1. $R(s)$ is SNI.

2. $\lim_{\omega \to \infty}j\omega[R(j\omega)-R^*(j\omega)]>0$ and $\lim_{\omega \to 0}j\frac{1}{\omega}[R(j\omega)-R^*(j\omega)]>0$.
\end{definition}

\begin{lemma}(NI Lemma)\cite{xiong2010negative}\label{lemma:NI}
Let $(A,B,C,D)$ be a minimal state-space realisation of an $p\times p$ real-rational proper transfer function matrix $R(s)$ where $A\in \mathbb R^{n\times n}$, $B\in \mathbb R^{n\times p}$, $C\in \mathbb R^{p\times n}$, $D\in \mathbb R^{p\times p}$. Then $R(s)$ is NI if and only if:

1. $\det(A)\neq 0$, $D=D^T$;

2. There exists a matrix $Y=Y^T>0$, $Y\in \mathbb R^{n\times n}$ such that
\begin{equation}\label{eq:NI condition}
	AY+YA^T\leq 0,\qquad \textnormal{and} \qquad B+AYC^T=0.
\end{equation}	
\end{lemma}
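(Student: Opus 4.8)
The plan is to reduce the asserted equivalence to the classical Positive Real (PR) Lemma by passing to an auxiliary strictly proper transfer function. The starting point is the resolvent identity $R(s) = R(0) - s\Psi(s)$, valid whenever $\det(A)\neq 0$, where $\Psi(s):=C(sI-A)^{-1}G$ and $G:=-A^{-1}B$; here $R(0)=D-CA^{-1}B$, and the identity follows from $s(sI-A)^{-1}A^{-1}=(sI-A)^{-1}+A^{-1}$. The realization $(A,G,C,0)$ of $\Psi$ inherits minimality from $(A,B,C,D)$ because $A$ is invertible (the smallest $A$-invariant subspace containing $\mathrm{im}\,G$ already contains $\mathrm{im}\,B,\mathrm{im}\,AB,\dots$). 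The crux is the imaginary-axis computation $j[R(j\omega)-R^*(j\omega)] = \omega\,[\Psi(j\omega)+\Psi^*(j\omega)]$, which (once $R(0)=R(0)^T$ is known) makes the NI inequality for $R$ equivalent, for $\omega>0$, to the PR inequality for $\Psi$.

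Before invoking the PR Lemma I would settle the purely structural claims of condition 1 in the necessity direction. Minimality identifies the poles of $R$ with $spec(A)$, so NI condition 1 yields $\det(A)\neq 0$ and places $spec(A)$ in the $CLHP$. To get $D=D^T$ and $R(0)=R(0)^T$, I would pass to the limits $\omega\to\infty$ and $\omega\to 0^+$ in the NI inequality: each limit is a matrix of the form $j(M-M^T)$ with $M$ real, which is Hermitian with eigenvalues occurring in $\pm$ pairs, so its positive semidefiniteness forces $M=M^T$.

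With these in hand, the identity and the frequency computation show $\Psi$ is positive real, its imaginary-axis poles being simple with Hermitian positive semidefinite residues --- this matches NI condition 3 through $\mathrm{Res}_{j\omega_0}\Psi=K_0/\omega_0$. Applying the PR Lemma to the minimal realization $(A,G,C,0)$ (the $D_\Psi=0$ case) produces $P=P^T>0$ with $A^TP+PA\le 0$ and $PG=C^T$. Setting $Y:=P^{-1}>0$ and conjugating by $Y$ turns $A^TP+PA\le 0$ into $AY+YA^T\le 0$, while $PG=C^T$ becomes $YC^T=G=-A^{-1}B$, that is $B+AYC^T=0$. For sufficiency I would run the chain backwards, setting $P=Y^{-1}$ to recover the PR certificate; alternatively, substituting $B=-AYC^T$ directly gives $j[R(j\omega)-R^*(j\omega)] = -\omega\,C(j\omega I-A)^{-1}(AY+YA^T)(-j\omega I-A^T)^{-1}C^T\ge 0$, from which conditions 2 and 3 of the NI definition can be read off.

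The main obstacle is the presence of poles of $R$ on the imaginary axis, which the NI definition explicitly allows. The strict PR Lemma assumes a Hurwitz $A$, so I expect to need its marginally stable version together with a careful residue analysis at each $j\omega_0$, and to argue that the resulting certificate $P$ is strictly positive definite rather than merely semidefinite --- this is exactly where minimality is indispensable. By comparison, the minimality transfer to $(A,G,C,0)$, the resolvent identity, and the $\pm$-pairing argument for the symmetry of $D$ and $R(0)$ are routine.
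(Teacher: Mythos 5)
Your attempt cannot be checked against an in-paper proof, because the paper does not prove this lemma at all: it is imported verbatim as a preliminary result from \cite{xiong2010negative}. The only visible trace of that reference's argument is in the necessity parts of Lemmas \ref{thm:rd1 feedback NI} and \ref{thm:rd2 feedback NI}, where the authors invoke ``the proof of Lemma \ref{lemma:NI} (see Lemma 7 in \cite{xiong2010negative})'' to obtain $X=X^T>0$ satisfying
\begin{equation*}
\left[\begin{matrix} XA+A^TX & XB-A^TC^T \\ B^TX-CA & -(CB+B^TC^T)\end{matrix}\right]\leq 0,
\end{equation*}
which is precisely the positive real lemma LMI for the realization $(A,B,CA,CB)$ of $F(s)=s[R(s)-D]$; so the cited proof reduces NI to positive realness by \emph{multiplying} by $s$, whereas you reduce it to positive realness by \emph{dividing} by $s$. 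Your route is a legitimate variant and arguably cleaner for this particular statement: since your $\Psi(s)=[R(0)-R(s)]/s=C(sI-A)^{-1}(-A^{-1}B)$ is strictly proper, the PR certificate collapses to the two conditions $A^TP+PA\leq 0$ and $PG=C^T$, and the lemma's conditions --- including the exact equality $B+AYC^T=0$ --- drop out immediately upon setting $Y=P^{-1}$; the multiplication-by-$s$ route instead leaves the feedthrough block $-(CB+B^TC^T)$ in the LMI, from which recovering that exact equality requires further algebra. Your supporting computations all check out: the resolvent identity, the minimality transfer to $(A,G,C,0)$ under $\det(A)\neq 0$, the identity $j[R(j\omega)-R^*(j\omega)]=\omega[\Psi(j\omega)+\Psi^*(j\omega)]$ (correctly deferred until $R(0)=R(0)^T$ is established via the $\pm$-paired eigenvalue argument, which is itself correct), the residue scaling $\mathrm{Res}_{j\omega_0}\Psi=K_0/\omega_0$, and the direct sufficiency formula for $j[R(j\omega)-R^*(j\omega)]$. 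The one point where your proof rests on unproved material is the one you flag yourself: you need the classical non-strict PR lemma (minimal realization, imaginary-axis poles permitted, certificate $P$ strictly positive definite), i.e., Anderson's version rather than the Hurwitz/KYP version, plus the routine $\omega\leq 0$ and $\omega\to 0$ checks and the two-way pole-simplicity/residue correspondence. Since the statement you are proving is itself a literature citation, resting on that classical PR lemma puts your argument at the same level of rigor as the original source; just be aware that the genuinely hard content of the lemma (semisimplicity of imaginary-axis modes, positive definiteness of $P$ under minimality) is concentrated inside that cited result, not eliminated by your reduction.
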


\begin{lemma}(SSNI Lemma)\cite{lanzon2011strongly}\label{thm:SSNI}
Given a square transfer function matrix $R(s)\in \mathbb R^{p\times p}$ with a state-space realisation $(A,B,C,D)$, where $A\in \mathbb R^{n\times n}$, $B\in \mathbb R^{n\times p}$, $C \in \mathbb R^{p\times n}$ and $D\in \mathbb R^{p\times p}$. Suppose $R(s)+R(-s)^T$ has normal rank $p$ and $(A,B,C,D)$ has no observable uncontrollable modes. Then $A$ is Hurwitz and $R(s)$ is SSNI if and only if $D=D^T$ and there exists a matrix $Y=Y^T>0$ that satisfies conditions
\begin{equation}\label{eq:SSNI condition}
AY+YA^T<0, \quad \textnormal{and} \quad B+AYC^T=0.
\end{equation}
\end{lemma}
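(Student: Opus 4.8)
The plan is to establish the two implications of the equivalence separately, both resting on a single algebraic identity that converts the coupling equation $B + AYC^T = 0$ into a frequency-domain statement about the Hermitian imaginary part of $R$. Writing $\Phi(\omega) = (j\omega I - A)^{-1}$ and assuming $D = D^T$ together with $B = -AYC^T$, a direct manipulation (using $\Phi A = -I + j\omega\Phi$ and $A^T\Phi^* = -I - j\omega\Phi^*$) collapses all the $CYC^T$ terms and yields the central identity
\begin{equation}\label{eq:sketch-identity}
j\big[R(j\omega) - R^*(j\omega)\big] = -\,\omega\, C\,\Phi(\omega)\,\big(AY + YA^T\big)\,\Phi^*(\omega)\,C^T .
\end{equation}
Everything else follows from reading off \eqref{eq:sketch-identity} in the two directions.

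For sufficiency, suppose $D = D^T$ and $Y = Y^T > 0$ solves \eqref{eq:SSNI condition}. Since $AY + YA^T < 0$ with $Y > 0$ is a Lyapunov inequality, $A$ is Hurwitz, so $R(s)$ has no poles in $\mathfrak R[s]\ge 0$ and condition~1 of the SNI definition holds. Setting $M = -(AY+YA^T) > 0$, identity \eqref{eq:sketch-identity} gives $j[R(j\omega)-R^*(j\omega)] = \omega\,(C\Phi)\,M\,(C\Phi)^* \ge 0$ for all $\omega > 0$, and this matrix is positive definite precisely when $C\Phi(\omega)$ has full row rank $p$; because $\Phi(\omega)$ is nonsingular, this reduces to a rank condition on $C$ that I would extract from the normal-rank hypothesis on $R(s)+R(-s)^T$ together with the absence of observable uncontrollable modes. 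The two SSNI boundary conditions are obtained by taking limits in \eqref{eq:sketch-identity}: as $\omega\to\infty$ one finds $\lim j\omega[R-R^*] = -C(AY+YA^T)C^T = CMC^T$, and as $\omega\to 0$ one finds $\lim j\tfrac1\omega[R-R^*] = CA^{-1}M A^{-T}C^T$, both positive definite under the same rank condition since $A$ is invertible.

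For necessity, assume $A$ is Hurwitz and $R(s)$ is SSNI. First, $D = D^T$: the SSNI boundary condition forces $R(j\omega)-R^*(j\omega) = O(1/\omega)$, so $\lim_{\omega\to\infty} j[R(j\omega)-R^*(j\omega)] = j(D - D^T) = 0$. The construction of $Y$ is the heart of the argument, and the route I would take is to pass to the associated transfer function $\Psi(s) = s\big(R(s) - D\big) = CB + CA(sI-A)^{-1}B$, which has the realization $(A, B, CA, CB)$. The identity above (or a direct computation) shows that the Hermitian part of $\Psi$ on the imaginary axis equals $\tfrac{\omega}{2}\,j[R(j\omega)-R^*(j\omega)]$, so the SSNI property of $R$ makes $\Psi$ strictly positive real with $\Psi(\infty)+\Psi(\infty)^T = CB + B^TC^T > 0$. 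Invoking the (strict) positive-real / KYP lemma for $\Psi$ then produces a matrix $P = P^T > 0$ with $A^TP + PA < 0$ and a coupling condition relating $PB$ to $(CA)^T$; setting $Y = P^{-1}$ converts $A^TP+PA<0$ into $AY + YA^T < 0$ and, after using the invertibility of $A$, converts the coupling condition into $B + AYC^T = 0$.

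The main obstacle is the necessity direction, and within it two points require care. The first is the legitimacy of applying the KYP/positive-real lemma to $\Psi$: this is exactly where the hypotheses that $(A,B,C,D)$ has no observable uncontrollable modes and that $R(s)+R(-s)^T$ has normal rank $p$ enter, since they guarantee the nondegeneracy $CB + B^TC^T > 0$ and the controllability/observability structure needed for the lemma to yield a strict, positive-definite solution rather than merely a semidefinite one. The second is bookkeeping: translating the storage-function inequality for $\Psi$ back through the substitution $Y = P^{-1}$ and verifying that the positive-real coupling condition is equivalent to $B + AYC^T = 0$. In the sufficiency direction, the only nontrivial step is justifying the rank/strictness claim used to upgrade the semidefinite conclusion of \eqref{eq:sketch-identity} to the strict positive-definiteness demanded by the SNI and SSNI definitions.
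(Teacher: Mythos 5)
First, a point of reference: the paper does not prove this lemma at all --- it is quoted in the Preliminaries as a known result from \cite{lanzon2011strongly}, so there is no in-paper proof to compare your argument against, and it must stand on its own. Your central identity $j[R(j\omega)-R^*(j\omega)] = -\omega\, C\Phi(\omega)(AY+YA^T)\Phi^*(\omega)C^T$ is correct (the algebra checks out), and it does reduce the sufficiency direction to showing that $C$ has full row rank. But the step where you ``extract'' that rank condition from the hypotheses is a genuine gap, and with the hypothesis as literally stated it cannot be closed, because the statement is then false. Take $p=2$, $n=1$, $A=-1$, $B=(1\ \ 0)$, $C=(1\ \ 0)^T$, $D=I_2$, $Y=1$: then $D=D^T$, $AY+YA^T=-2<0$, $B+AYC^T=0$, the realization is controllable (hence certainly has no observable uncontrollable modes), and $R(s)+R(-s)^T=\mathrm{diag}\left(\tfrac{4-2s^2}{1-s^2},\,2\right)$ has normal rank $2$; yet $j[R(j\omega)-R^*(j\omega)]=\mathrm{diag}\left(\tfrac{2\omega}{1+\omega^2},\,0\right)$ is singular for every $\omega$, so $R$ is not even SNI. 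The result holds only when the normal-rank hypothesis is placed on $R(s)-R(-s)^T$ (which is what the coupling condition naturally controls, and is presumably what the cited reference intends); with that reading the extraction is one line: if $C^Tv=0$ then $Bv=-AYC^Tv=0$, hence $[R(s)-R(-s)^T]v\equiv 0$, contradicting full normal rank. Your proposal should have detected and flagged this sign issue instead of assuming the stated hypothesis could be made to deliver the rank condition.

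Second, the necessity direction, which you yourself identify as the heart of the argument, is not actually proved. Your plan is to apply ``the (strict) positive-real / KYP lemma'' to $\Psi(s)=s(R(s)-D)=CB+CA(sI-A)^{-1}B$. But $\Psi(0)=CB+CA(-A)^{-1}B=0$, so the Hermitian part of $\Psi$ vanishes at $\omega=0$ and $\Psi$ is not strictly positive real in any standard sense; no off-the-shelf strict KYP lemma will return $P>0$ with $A^TP+PA<0$, and the non-strict PR lemma only yields $A^TP+PA\leq 0$. Moreover, any KYP-type lemma you could invoke requires controllability or minimality of the realization $(A,B,CA,CB)$, whereas here only the much weaker ``no observable uncontrollable modes'' is assumed --- coping with exactly this non-minimality, and converting the two SSNI limit conditions into strictness of the Lyapunov inequality, is the entire technical content of the cited Lanzon--Patra--Petersen--Song result. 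So the sufficiency half of your argument is essentially sound (modulo the rank issue above), but the necessity half currently consists of naming the conclusion of a lemma that does not apply as stated.
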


\begin{lemma}(Internal Stability of Interconnected NI Systems)\cite{xiong2010negative}\label{lemma:dc gain theorem}
	Consider an NI transfer function matrix $R(s)$ and an SNI transfer function matrix $R_s(s)$ that satisfy $R(\infty)R_s(\infty)=0$ and $R_s(\infty)\geq 0$. Then the positive feedback interconnection $[R(s),R_s(s)]$ is internally stable if and only if $\lambda_{max}(R(0)R_s(0))<1$.
\end{lemma}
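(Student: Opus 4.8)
The plan is to characterise internal stability of the positive feedback interconnection $[R(s),R_s(s)]$ through the closed-right-half-plane zeros of the return difference $\det(I-R(s)R_s(s))$, and then to locate those zeros using the negative-imaginary phase structure. First I would record well-posedness: since $R(\infty)R_s(\infty)=0$, the feedthrough matrix $I-R(\infty)R_s(\infty)=I$ is nonsingular, so the loop is well-posed and internal stability is equivalent to $\det(I-R(s)R_s(s))\neq 0$ throughout the closed right half plane, where the imaginary axis is read with a Nyquist indentation around the (at most simple) $j\omega$-axis poles of $R(s)$. Because $R$ is NI it has no poles in $\mathfrak R[s]>0$ and $R_s$ is SNI with no poles in $\mathfrak R[s]\geq 0$, the return difference is analytic in the open right half plane, so its zeros there move continuously and can enter or leave only through this boundary.

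The heart of the argument is a purely algebraic phase inequality on the imaginary axis, and this is the step I would carry first. Fix $\omega\in(0,\infty)$ at which neither $R$ nor $R_s$ has a pole and suppose, for contradiction, that $1$ is an eigenvalue of $R(j\omega)R_s(j\omega)$, i.e. $R(j\omega)R_s(j\omega)x=x$ for some $x\neq 0$. Set $y=R_s(j\omega)x$, so that $x=R(j\omega)y$. The NI condition $j[R-R^*]\geq 0$ gives $\Im\!\big(y^*R(j\omega)y\big)\leq 0$, while the SNI condition $j[R_s-R_s^*]>0$ gives $\Im\!\big(x^*R_s(j\omega)x\big)<0$. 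Since $x^*R_s(j\omega)x=x^*y$ and $y^*R(j\omega)y=y^*x=\overline{x^*y}$, these force $\Im(x^*y)\geq 0$ and $\Im(x^*y)<0$ at once, a contradiction. Hence $\det(I-R(j\omega)R_s(j\omega))\neq 0$ at every regular $\omega\in(0,\infty)$, and this used no DC-gain hypothesis. The endpoints are handled separately: as $\omega\to\infty$ the condition $R(\infty)R_s(\infty)=0$ gives $\det(I)=1$, and at $\omega=0$ the matrices $R(0)$, $R_s(0)$ are real symmetric, so $R(0)R_s(0)$ has real spectrum and the hypothesis $\lambda_{max}(R(0)R_s(0))<1$ rules out the eigenvalue $1$.

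With the open imaginary axis cleared, I would settle both directions by a homotopy in a gain parameter $\tau\in[0,1]$ applied to the SNI factor. For each $\tau$ the scaled $\tau R_s$ is again SNI, so the phase contradiction applies verbatim and $\det(I-\tau R(s)R_s(s))$ has no imaginary-axis zeros on $(0,\infty)$; the only possible boundary crossing is at the origin, where a zero appears exactly when $\det(I-\tau R(0)R_s(0))=0$, i.e. when $\tau=1/\lambda$ for some eigenvalue $\lambda$ of $R(0)R_s(0)$. Starting from the zero-free $\det I=1$ at $\tau=0$, if $\lambda_{max}(R(0)R_s(0))<1$ then the smallest crossing value $1/\lambda_{max}>1$ is never reached for $\tau\in[0,1]$, so no closed-right-half-plane zero forms and the interconnection is internally stable. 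Conversely, if $\lambda_{max}(R(0)R_s(0))\geq 1$ then a crossing occurs at $\tau=1/\lambda_{max}\leq 1$, injecting a right-half-plane zero of the return difference that persists at $\tau=1$, so internal stability fails. This yields the claimed equivalence.

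I expect the main obstacle to be the rigorous treatment of the imaginary-axis poles of the NI factor $R(s)$. Unlike an SNI system, an NI system may have simple poles on the $j\omega$-axis with Hermitian positive-semidefinite residues, so the Nyquist contour must be indented around these points and one must verify that the large-radius excursions of the eigenloci of $R(j\omega)R_s(j\omega)$ near such poles do not encircle the critical point $+1$. This is exactly where the residue condition in the definition of NI, together with $R(\infty)R_s(\infty)=0$ and $R_s(\infty)\geq 0$, must be invoked to control the direction in which the loci escape to and return from infinity. The clean phase contradiction above covers only the regular frequencies and has to be supplemented by this boundary bookkeeping to complete the encirclement count.
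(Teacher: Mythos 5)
The paper itself contains no proof of this lemma: it is imported as background from \cite{xiong2010negative}. So the comparison below is against the proof in that reference and against its pole-free precursor in \cite{lanzon2008stability}.

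Your phase contradiction at regular frequencies is correct and is indeed the standard core of sufficiency: if $R(j\omega)R_s(j\omega)x=x$ with $y=R_s(j\omega)x$, the NI and SNI sign conditions force $\mathrm{Im}(x^*y)\geq 0$ and $\mathrm{Im}(x^*y)<0$ simultaneously. The genuine gap is what you set aside as ``boundary bookkeeping.'' The lemma allows $R(s)$ to have simple poles on $j\mathbb R\setminus\{0\}$ with positive semidefinite residues, and all of the difficulty of the cited result lives exactly there. First, your stability criterion is not even well defined at such a pole: with well-posedness, the closed-loop characteristic polynomial is, up to a nonzero constant, $\phi_R(s)\phi_{R_s}(s)\det\big(I-R(s)R_s(s)\big)$, where $\phi_R,\phi_{R_s}$ are the open-loop characteristic polynomials; since $\phi_R(j\omega_0)=0$ at a pole $j\omega_0$ of $R$, internal stability requires $\det\big(I-R(s)R_s(s)\big)$ to blow up at $j\omega_0$ of precisely the right order, and establishing this is where the residue condition $K_0\geq 0$ --- the one hypothesis of the NI definition your argument never actually uses --- must interact with the SNI structure of $R_s$. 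Second, both directions of your homotopy (sufficiency: ``no zero ever crosses the axis''; necessity: ``the RHP zero created at $\tau=1/\lambda_{max}$ persists to $\tau=1$'') need crossings at the pole frequencies to be excluded, and your phase argument is silent there. As it stands, your proof establishes essentially the pole-free case, i.e.\ the theorem of \cite{lanzon2008stability}, not the lemma as stated; handling the imaginary-axis poles is the main technical content of \cite{xiong2010negative}, which for that reason proceeds by algebraic/state-space arguments built on the NI lemma rather than by a Nyquist count.

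Two further, smaller defects. Your justification that $\lambda_{max}(R(0)R_s(0))$ is well defined is incorrect: symmetry of $R(0)$ and $R_s(0)$ does not give a real spectrum of the product, e.g.\ $\mathrm{diag}(1,-1)\left[\begin{smallmatrix}0&1\\1&0\end{smallmatrix}\right]=\left[\begin{smallmatrix}0&1\\-1&0\end{smallmatrix}\right]$ has eigenvalues $\pm j$. The correct route is the NI dc-gain property $R_s(0)\geq R_s(\infty)$ together with the hypothesis $R_s(\infty)\geq 0$, so that $R_s(0)\geq 0$ and $R(0)R_s(0)$ shares its nonzero spectrum with the symmetric matrix $R_s(0)^{1/2}R(0)R_s(0)^{1/2}$. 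Finally, the necessity step needs an actual transversality or encirclement argument rather than the word ``persists''; the clean repair is to note that your phase contradiction applied to $R_s/\mu$ (still SNI for $\mu>0$) shows the eigenloci of $R(j\omega)R_s(j\omega)$ never meet the ray $[1,\infty)$ for $\omega\in(0,\infty)$, so that ray can only be crossed at $\omega=0$, which pins down the winding number about $+1$ when $\lambda_{max}(R(0)R_s(0))>1$.
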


\begin{definition}(Lyapunov Stability)\cite{bernstein2009matrix}
\label{def:LS}
A square matrix $A$ is said to be Lyapunov stable if $spec(A)\subset CLHP$ and every purely imaginary eigenvalue of $A$ is semisimple.
\end{definition}

\begin{lemma}(Lyapunov Stability Theorem - Asymptotic Stablity)\cite{hespanha2018linear}\label{thm:Lyapunov}
Consider a continuous-time homogeneous linear time-invariant (LTI) system
\begin{equation}\label{eq:LTI system}
\dot x = \mc A x,\qquad x \in \mathbb R^n	,
\end{equation}
the following statements are equivalent:

1. The system (\ref{eq:LTI system}) is asymptotically stable.

2. All the eigenvalues of $\mc A$ have strictly negative real parts.

3. For every symmetric positive definite matrix $\mc Q$, there exists a unique solution $\mc P$ to the following Lyapunov equation
\begin{equation}\label{eq:Lyapunov}
	\mc A^T\mc P+\mc P\mc A = -\mc Q
\end{equation}
Moreover, $\mc P$ is symmetric and positive definite.

4. There exists a symmetric positive definite matrix $\mc P$ for which the following Lyapunov matrix inequality holds:
\begin{equation*}
	\mc A^T\mc P+\mc P\mc A<0.
\end{equation*}
\end{lemma}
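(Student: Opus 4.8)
The plan is to prove the four statements equivalent via the cycle $(2)\Rightarrow(3)\Rightarrow(4)\Rightarrow(2)$ together with the classical equivalence $(1)\Leftrightarrow(2)$. The equivalence $(1)\Leftrightarrow(2)$ is the foundational fact of linear ODE theory: the solution of (\ref{eq:LTI system}) is $x(t)=e^{\mc A t}x(0)$, and passing to the Jordan form of $\mc A$ shows that $e^{\mc A t}\to 0$ as $t\to\infty$ for every initial condition precisely when every eigenvalue of $\mc A$ has strictly negative real part. I would state this with a brief reference rather than reprove it, since it is entirely standard.

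For the cycle, $(3)\Rightarrow(4)$ is immediate: specialising $(3)$ to $\mc Q=I$ produces a symmetric positive definite $\mc P$ with $\mc A^T\mc P+\mc P\mc A=-I<0$, which is $(4)$. To obtain $(4)\Rightarrow(2)$ I would apply an eigenvalue test. Let $\lambda\in spec(\mc A)$ have a (possibly complex) eigenvector $v\neq 0$, so $\mc A v=\lambda v$. Since $\mc A$ is real, $\mc A^T$ coincides with the conjugate transpose of $\mc A$, and a one-line computation gives
\begin{equation*}
v^*(\mc A^T\mc P+\mc P\mc A)v=2\,\mathfrak R[\lambda]\,v^*\mc P v.
\end{equation*}
The left-hand side is strictly negative by $(4)$ while $v^*\mc P v>0$ by positive definiteness, so $\mathfrak R[\lambda]<0$; as $\lambda$ is arbitrary, $(2)$ follows.

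The substantive step is $(2)\Rightarrow(3)$. Assuming every eigenvalue of $\mc A$ has negative real part, for a given $\mc Q=\mc Q^T>0$ I would propose the candidate
\begin{equation*}
\mc P=\int_0^\infty e^{\mc A^T t}\,\mc Q\,e^{\mc A t}\,dt,
\end{equation*}
whose convergence is guaranteed by the exponential decay of $e^{\mc A t}$, and which is visibly symmetric and positive definite. Using $\frac{d}{dt}\big(e^{\mc A^T t}\mc Q e^{\mc A t}\big)=\mc A^T e^{\mc A^T t}\mc Q e^{\mc A t}+e^{\mc A^T t}\mc Q e^{\mc A t}\mc A$ and integrating from $0$ to $\infty$, the fundamental theorem of calculus telescopes the integral to $-\mc Q$, and factoring $\mc A^T$ and $\mc A$ out of the integral yields $\mc A^T\mc P+\mc P\mc A=-\mc Q$. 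This establishes existence.

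The part I expect to require the most care is the uniqueness assertion in $(3)$. The natural route is to regard the map $\mc L(\mc P)=\mc A^T\mc P+\mc P\mc A$ as a linear operator on the space of symmetric matrices and to observe that its spectrum consists of the sums $\lambda_i+\lambda_j$ with $\lambda_i,\lambda_j\in spec(\mc A)$. Because all eigenvalues lie in the open left half-plane, every such sum has negative real part, so $0\notin spec(\mc L)$ and $\mc L$ is invertible; hence the solution of $\mc A^T\mc P+\mc P\mc A=-\mc Q$ is unique, and by existence it must equal the integral above. Verifying the spectral description of $\mc L$ rigorously, for instance via the Kronecker-sum representation $I\otimes\mc A^T+\mc A^T\otimes I$, is the main technical obstacle; the remaining implications are routine once it is in place.
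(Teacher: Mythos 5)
Your proof is correct, but there is nothing in the paper to compare it against: the paper states this result as a preliminary lemma imported directly from \cite{hespanha2018linear} and never proves it; the paper's own contributions (Lemmas \ref{thm:rd1 feedback NI}, \ref{thm:rd1 SSNI}, etc.) merely invoke it. Judged on its own terms, your argument is the standard textbook one and every step is sound: $(1)\Leftrightarrow(2)$ via the matrix exponential and Jordan form, $(3)\Rightarrow(4)$ by specialising to $\mc Q=I$, $(4)\Rightarrow(2)$ by the eigenvector identity $v^*(\mc A^T\mc P+\mc P\mc A)v=2\mathfrak{R}[\lambda]\,v^*\mc P v$ (legitimate because a real symmetric definite matrix retains its sign on complex vectors, as one sees by splitting $v$ into real and imaginary parts), and $(2)\Rightarrow(3)$ via the integral $\mc P=\int_0^\infty e^{\mc A^T t}\mc Q e^{\mc A t}\,dt$. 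One remark on the step you single out as the main technical obstacle: the Kronecker-sum spectral fact is not actually needed for uniqueness. If $\mc P_1$ and $\mc P_2$ both solve (\ref{eq:Lyapunov}), then $\Delta=\mc P_1-\mc P_2$ satisfies $\mc A^T\Delta+\Delta\mc A=0$, hence $\frac{d}{dt}\bigl(e^{\mc A^T t}\Delta e^{\mc A t}\bigr)=e^{\mc A^T t}(\mc A^T\Delta+\Delta\mc A)e^{\mc A t}=0$, so $e^{\mc A^T t}\Delta e^{\mc A t}$ is constant in $t$; it equals $\Delta$ at $t=0$ and tends to $0$ as $t\to\infty$ because $\mc A$ is Hurwitz under $(2)$, whence $\Delta=0$. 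This keeps the entire proof at the level of the matrix exponential and eliminates the only ingredient you flagged as requiring separate verification.
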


\begin{lemma}(Lyapunov Stability Theorem - Lyapunov Stable)\cite{bernstein2009matrix}\label{thm:marginally stable}
	Let $\mc A\in \mathbb R^{n\times n}$ and assume there exists a positive semidefinite matrix $\mc Q \in \mathbb R^{n\times n}$ and a positive definite matrix $\mc P \in \mathbb R^{n\times n}$ such that (\ref{eq:Lyapunov}) is satisfied, then $\mc A$ is Lyapunov stable.
\end{lemma}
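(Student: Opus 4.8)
The plan is to work directly from the Lyapunov equation (\ref{eq:Lyapunov}) by testing it against (generalized) eigenvectors of $\mc A$, and to establish the two defining properties of Lyapunov stability in Definition \ref{def:LS} separately: first that $spec(\mc A)\subset CLHP$, and second that every purely imaginary eigenvalue is semisimple.

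For the first property, I would take any $\lambda\in spec(\mc A)$ with a (possibly complex) eigenvector $v\neq 0$, so that $\mc A v=\lambda v$. Pre- and post-multiplying (\ref{eq:Lyapunov}) by $v^*$ and $v$, and using $\mc A^T=\mc A^*$ (as $\mc A$ is real) together with $v^*\mc A^*=\overline{\lambda}\,v^*$, collapses the identity to
\begin{equation*}
2\mathfrak R[\lambda]\,v^*\mc P v = -v^*\mc Q v.
\end{equation*}
Since $\mc P>0$ and $v\neq 0$ force $v^*\mc P v>0$, while $\mc Q\geq 0$ gives $v^*\mc Q v\geq 0$, I conclude $\mathfrak R[\lambda]\leq 0$. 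As $\lambda$ was arbitrary, $spec(\mc A)\subset CLHP$.

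For the second property, I would argue by contradiction: suppose a purely imaginary eigenvalue $\lambda=j\omega$ fails to be semisimple, so it admits a Jordan chain of length at least two, i.e. vectors $v_1,v_2\neq 0$ with $\mc A v_1=\lambda v_1$ and $\mc A v_2=\lambda v_2+v_1$. Applying the scalar identity above to $v_1$ and using $\mathfrak R[\lambda]=0$ yields $v_1^*\mc Q v_1=0$; since $\mc Q$ is symmetric positive semidefinite we may write $\mc Q=R^TR$, whence $v_1^*\mc Q v_1=\|Rv_1\|^2=0$ forces $Rv_1=0$ and hence $\mc Q v_1=0$. Next I would test (\ref{eq:Lyapunov}) on the pair $(v_2,v_1)$, evaluating $v_2^*(\mc A^T\mc P+\mc P\mc A)v_1=-v_2^*\mc Q v_1$ with $v_2^*\mc A^*=\overline{\lambda}\,v_2^*+v_1^*$. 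Because $\lambda+\overline{\lambda}=0$ and $\mc Q v_1=0$, the cross terms vanish and the relation reduces to $v_1^*\mc P v_1=0$, contradicting $\mc P>0$ and $v_1\neq 0$. Hence no Jordan chain of length exceeding one exists, and every purely imaginary eigenvalue is semisimple.

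The first part is essentially the standard quadratic-form computation and should be routine. The step I expect to require the most care is the semisimplicity argument: the decisive observations are extracting $\mc Q v_1=0$ from $v_1^*\mc Q v_1=0$ (which relies on $\mc Q\geq 0$) and then exploiting the off-diagonal Jordan relation $\mc A v_2=\lambda v_2+v_1$ to surface the term $v_1^*\mc P v_1$ that yields the contradiction. Combining the two parts gives exactly the conditions of Definition \ref{def:LS}, so $\mc A$ is Lyapunov stable.
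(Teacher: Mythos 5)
Your proof is correct. There is, however, no internal proof to compare it against: the paper states this lemma as a quoted result from \cite{bernstein2009matrix} and never proves it. Your argument is a legitimate, self-contained substitute for the standard textbook route. The usual proof (as in \cite{bernstein2009matrix} and \cite{hespanha2018linear}) takes $V(x)=x^T\mc P x$ as a Lyapunov function for $\dot x=\mc A x$, observes $\dot V=-x^T\mc Q x\leq 0$ so that every trajectory is bounded, and then invokes the equivalence for LTI systems between boundedness of all trajectories and Lyapunov stability of $\mc A$ in the sense of Definition \ref{def:LS}. You instead work purely at the matrix level: the quadratic-form evaluation on eigenvectors yields $spec(\mc A)\subset CLHP$, and the Jordan-chain contradiction rules out defective imaginary eigenvalues, with the two decisive steps being the extraction of $\mc Q v_1=0$ from $v_1^*\mc Q v_1=0$ via a real factorization $\mc Q=R^TR$ (valid since $\mc Q$ is real symmetric positive semidefinite), and the evaluation of the equation on the pair $(v_2,v_1)$, where $\lambda+\overline{\lambda}=0$ kills the cross terms and leaves $v_1^*\mc P v_1=0$, contradicting $\mc P>0$. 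Both parts are airtight; the algebraic route is more elementary in that it avoids the trajectory-boundedness characterization of Lyapunov stability, and it makes explicit exactly where each hypothesis ($\mc P>0$ versus $\mc Q\geq 0$) enters, at the modest cost of handling generalized eigenvectors by hand.
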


\begin{lemma}(Eigenvector Test for Controllability)\cite{hespanha2018linear}\label{thm:eigenvector test for ctrl}
The pair $(A,B)$ is controllable if and only if there is no eigenvector of $A^T$ in the kernel of $B^T$.
\end{lemma}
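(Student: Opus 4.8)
The plan is to establish the logically equivalent contrapositive statement: the pair $(A,B)$ fails to be controllable if and only if $A^T$ possesses an eigenvector lying in $\ker(B^T)$ (understood over $\mathbb{C}$, since an eigenvector of a real matrix may be complex). I would take as the working definition of controllability the Kalman rank condition, namely that the controllability matrix $\mathcal{C}=[\,B\ \ AB\ \cdots\ A^{n-1}B\,]$ has rank $n$. The argument then splits into two implications, each built around the simple observation that a left eigenvector of $A$ which annihilates $B$ must annihilate every block $A^k B$ of $\mathcal{C}$.

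For the direction asserting that such an eigenvector forces loss of controllability, I would start from a nonzero $v$ with $A^T v=\lambda v$ and $B^T v=0$. Transposing the eigenvector relation gives $v^T A=\lambda v^T$, while $B^T v = 0$ reads $v^T B=0$. A short induction then yields $v^T A^k B=\lambda^k v^T B=0$ for all $k\geq 0$, so $v^T$ annihilates every column of $\mathcal{C}$; hence $v^T\mathcal{C}=0$ with $v\neq 0$, the rows of $\mathcal{C}$ are linearly dependent, $\operatorname{rank}(\mathcal{C})<n$, and $(A,B)$ is not controllable.

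For the converse, I would argue through invariant subspaces. Assuming $(A,B)$ not controllable, the reachable subspace $\mathcal{R}=\operatorname{Im}(\mathcal{C})$ is a proper subspace of $\mathbb{R}^n$ that contains $\operatorname{Im}(B)$ and is $A$-invariant, the invariance following from the Cayley--Hamilton theorem, which expresses $A^n B$ through lower powers of $A$ times $B$. The orthogonal complement $\mathcal{R}^\perp$ is then nonzero and $A^T$-invariant, because for $v\in\mathcal{R}^\perp$ and $x\in\mathcal{R}$ one has $\langle A^T v,x\rangle=\langle v,Ax\rangle=0$. Since every linear operator on a nonzero complex vector space has an eigenvector, the restriction of $A^T$ to the complexification of $\mathcal{R}^\perp$ supplies an eigenvector $v\in\mathcal{R}^\perp$; and because $\operatorname{Im}(B)\subseteq\mathcal{R}$, this $v$ is orthogonal to $\operatorname{Im}(B)$, i.e. $B^T v=0$. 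Thus $v$ is an eigenvector of $A^T$ in $\ker(B^T)$.

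The step I expect to be the main obstacle is guaranteeing the \emph{existence} of the eigenvector in the converse direction: one cannot simply read an eigenvector off the rank deficiency of $\mathcal{C}$, and the clean route is to exhibit the nonzero $A^T$-invariant subspace $\mathcal{R}^\perp$ and invoke the fundamental fact that such a subspace contains an eigenvector over $\mathbb{C}$. Care must be taken that this eigenvector is in general complex even though $A$ and $B$ are real, which is precisely why the statement is read with eigenvectors taken over $\mathbb{C}$; the remaining steps are elementary linear algebra.
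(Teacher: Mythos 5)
The paper offers no proof of this lemma to compare against: it is quoted directly from Hespanha's textbook (it is the eigenvector form of the PBH test) and used as a black box throughout. So your proposal can only be judged on its own merits, and on those it is correct and complete — indeed it is essentially the standard textbook argument. The forward implication is airtight: from $v^TA=\lambda v^T$ and $v^TB=0$ you get $v^T A^k B=0$ for all $k$, hence $v^T\mathcal{C}=0$, and since a real matrix has equal rank over $\mathbb{R}$ and $\mathbb{C}$, the complex annihilating vector still certifies $\operatorname{rank}(\mathcal{C})<n$. The converse is handled by the clean invariant-subspace route: $\mathcal{R}=\operatorname{Im}(\mathcal{C})$ is proper, contains $\operatorname{Im}(B)$, and is $A$-invariant by Cayley--Hamilton, so $\mathcal{R}^\perp$ is nonzero and $A^T$-invariant, and the restriction of $A^T$ to its complexification supplies the required eigenvector, which annihilates $B$ because both its real and imaginary parts lie in $\mathcal{R}^\perp\subseteq\ker(B^T)$. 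Your explicit care that the eigenvector is in general complex (even though $A$ and $B$ are real) is exactly the point sloppier write-ups gloss over, and it matters for this paper, where the test is applied to matrices such as $A_{11}^a$ whose spectrum is purely imaginary and whose eigenvectors are therefore genuinely complex.
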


\begin{lemma}(Eigenvector Test for Observability)\cite{hespanha2018linear}\label{thm:eigenvector test for obsv}
The pair $(A,C)$ is observable if and only if no eigenvector of $A$ is in the kernel of $C$.
\end{lemma}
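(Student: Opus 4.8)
The plan is to obtain this result as the dual of the Eigenvector Test for Controllability (Lemma~\ref{thm:eigenvector test for ctrl}), which has just been established, rather than reproving everything from scratch. The key observation is that observability of the pair $(A,C)$ is equivalent to controllability of the transposed pair $(A^T,C^T)$. Once this duality is in place, the desired characterization falls out by a direct substitution into the controllability test. Throughout, eigenvectors and kernels should be understood over $\mathbb C$, since $A$ may have complex eigenvalues with complex eigenvectors even though $A$ and $C$ are real.

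To make the duality precise, I would compare the relevant full-rank matrices. The pair $(A,C)$ is observable exactly when the observability matrix, whose transpose is $[C^T\ A^TC^T\ \cdots\ (A^T)^{n-1}C^T]$ using the identity $(CA^k)^T=(A^T)^kC^T$, has rank $n$. But this transposed matrix is precisely the controllability matrix of the pair $(A^T,C^T)$. Since a matrix and its transpose have equal rank, $(A,C)$ is observable if and only if $(A^T,C^T)$ is controllable. I would then apply Lemma~\ref{thm:eigenvector test for ctrl} with $A^T$ in the role of $A$ and $C^T$ in the role of $B$: that lemma asserts that $(A^T,C^T)$ is controllable if and only if there is no eigenvector of $(A^T)^T=A$ lying in the kernel of $(C^T)^T=C$. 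Chaining these two equivalences yields the claim that $(A,C)$ is observable if and only if no eigenvector of $A$ lies in $\ker(C)$.

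As an independent verification I would also sketch the direct argument. For the contrapositive of one direction, if $Av=\lambda v$ with $v\neq 0$ and $Cv=0$, then the free response from $x(0)=v$ is $y(t)=Ce^{At}v=e^{\lambda t}Cv=0$ for all $t$, so $v$ is indistinguishable from the origin and the pair is unobservable. For the converse, if $(A,C)$ is unobservable then the unobservable subspace $\mathcal N=\bigcap_{k=0}^{n-1}\ker(CA^k)$ is nonzero and $A$-invariant, and any nonzero $A$-invariant subspace contains an eigenvector of $A$, which by construction lies in $\mathcal N\subseteq\ker(C)$. Since the statement is classical, I do not expect a genuine obstacle; the only points requiring a moment of care are the rank/transpose identity underpinning the duality step and, in the direct argument, the fact that a nonzero $A$-invariant subspace always contains an eigenvector, which is exactly where one must work over $\mathbb C$ so that the restriction of $A$ to that subspace has a root of its characteristic polynomial.
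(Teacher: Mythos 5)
The paper does not actually prove this lemma: it is stated as a preliminary result quoted from \cite{hespanha2018linear}, so there is no in-paper proof to compare your argument against. On its own merits, your proof is correct. The duality route is the efficient one: observability of $(A,C)$ is equivalent to controllability of $(A^T,C^T)$ because the observability matrix of $(A,C)$ and the controllability matrix of $(A^T,C^T)$ are transposes of one another and hence have equal rank; then instantiating Lemma~\ref{thm:eigenvector test for ctrl} with $A^T$ in place of $A$ and $C^T$ in place of $B$ gives precisely the stated criterion, the two transposes cancelling as you observe. This is exactly how such statements are normally derived, and it is consistent with how the paper itself uses the two tests side by side in the proofs of Lemmas~\ref{thm:rd1 feedback NI} and \ref{thm:rd2 feedback NI}. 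Your independent direct argument is also sound, and you correctly flag the one place where care is needed, namely that eigenvectors must be taken over $\mathbb C$. The only detail left implicit is why the unobservable subspace $\bigcap_{k=0}^{n-1}\ker(CA^k)$ is $A$-invariant: for $x$ in this intersection one has $CA^k(Ax)=CA^{k+1}x=0$ for $k=0,\dots,n-2$ directly, while the remaining term $CA^nx=0$ requires the Cayley--Hamilton theorem to express $A^n$ as a combination of $I,A,\dots,A^{n-1}$. This is standard and does not constitute a gap.
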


\section{STATE FEEDBACK EQUIVALENCE TO A NEGATIVE IMAGINARY SYSTEM}
\label{section:problem statement}
Consider a linear multiple-input multiple-output (MIMO) system with the following state-space model
\begin{subequations}\label{eq:original state-space}
\begin{align}
\dot x =&\ \mc Ax+ \mc Bu,\\
y=&\ \mc Cx.
\end{align}
\end{subequations}
where $x\in \mathbb R^{n}$ and $u,y\in \mathbb R^p$. We consider the necessary and sufficient conditions under which the system (\ref{eq:original state-space}) is state feedback equivalent to an NI system. State feedback equivalence to an NI system is defined as follows:

\begin{definition}
The system (\ref{eq:original state-space}) is said to be state feedback equivalent to an NI system if there exists a state feedback control law
\begin{equation*}
u=K_x x+ K_v v	
\end{equation*}
such that the closed-loop system with the new input $v\in \mathbb R^p$ is minimal and NI.
\end{definition}

Let us recall the definition of relative degree.

\begin{definition}(Relative Degree)\cite{saberi1990global}
The system (\ref{eq:original state-space}) is said to have relative degree $r$ if its first $r-1$ Markov parameters are zero, i.e., $\mc C \mc A^i \mc B=0$ for $i=0,1,\cdots,r-2$; and $\mc C \mc A^{r-1} \mc B$ is nonsingular.	
\end{definition}

We discuss the relative degree one and two cases in the following two subsections.

\subsection{Relative Degree One Case}
\label{section:rd1}
Suppose the system (\ref{eq:original state-space}) has relative degree one, that is, $\det(\mc C \mc B)\neq 0$. Then $m:=n-p \geq 0$ and hence, without loss of generality, the system (\ref{eq:original state-space}) can be considered to be in the special coordinate basis (SCB) (see \cite{sannuti1987special,chen2004linear})
\begin{subequations}\label{eq:linearised1}
\begin{align}
\dot z =&\ A_{11}z+A_{12}y,\label{eq:linearised1a}\\
\dot y =&\ A_{21}z+A_{22}y+\mc C \mc Bu,\label{eq:linearised1b}\\
y=& \left[\begin{matrix}0& {I}\end{matrix}\right]\left[\begin{matrix}z \\ y\end{matrix}\right],\label{eq:linearised1c}
\end{align}
\end{subequations}
where $z\in \mathbb R^m$. This can be realized using a state transformation, according to Lemma \ref{lemma:rd1 state transformation}.
\begin{lemma}\label{lemma:rd1 state transformation}
	Suppose the system (\ref{eq:original state-space}) has relative degree one, that is, $\det(\mc C \mc B)\neq 0$. Then there exists a state transformation such that the resulting transformed system is of the form (\ref{eq:linearised1}).
\end{lemma}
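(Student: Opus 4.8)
The plan is to build an invertible state transformation of the block form $T=\begin{bmatrix} M\\ \mc C\end{bmatrix}\in\mathbb R^{n\times n}$ and set $\begin{bmatrix} z\\ y\end{bmatrix}=Tx$, so that the bottom $p$ components of the new state are exactly the output $y=\mc Cx$, while the top $m$ components $z:=Mx$ are chosen to be unaffected by the input. The whole construction hinges on selecting the $m\times n$ block $M$ appropriately.

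From the relative degree one hypothesis $\det(\mc C\mc B)\neq 0$ I first record that $\mc C\in\mathbb R^{p\times n}$ has full row rank $p$ and $\mc B\in\mathbb R^{n\times p}$ has full column rank $p$. Hence the left null space $\mathcal N:=\{v\in\mathbb R^n:\ v^T\mc B=0\}$ has dimension $n-p=m$, and I would take the rows of $M$ to form a basis of $\mathcal N$. By construction this gives $M\mc B=0$ and $\mathrm{rank}(M)=m$ (when $m=0$ the block $M$ is vacuous and $z$ is absent, and the argument below degenerates to $T=\mc C$).

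The key step, and the only place the hypothesis is genuinely used, is proving that $T$ is nonsingular. Suppose $a^T M+b^T\mc C=0$ for some $a\in\mathbb R^m$ and $b\in\mathbb R^p$. Multiplying on the right by $\mc B$ and using $M\mc B=0$ leaves $b^T(\mc C\mc B)=0$, so $b=0$ because $\mc C\mc B$ is invertible, and then $a=0$ since $M$ has full row rank. Thus the $n$ rows of $T$ are linearly independent and $T^{-1}$ exists. Writing $T^{-1}=\begin{bmatrix} P & Q\end{bmatrix}$ with $P\in\mathbb R^{n\times m}$ and $Q\in\mathbb R^{n\times p}$, the identity $TT^{-1}=I$ yields $MP=I_m$, $MQ=0$, $\mc CP=0$ and $\mc CQ=I_p$.

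It then remains only to transform the model. With $x=Pz+Qy$, differentiating $z=Mx$ gives $\dot z=M\mc A x+M\mc B u=(M\mc AP)z+(M\mc AQ)y$, where the control term vanishes precisely because $M\mc B=0$; this is (\ref{eq:linearised1a}) with $A_{11}=M\mc AP$ and $A_{12}=M\mc AQ$. Differentiating $y=\mc Cx$ gives $\dot y=(\mc C\mc AP)z+(\mc C\mc AQ)y+\mc C\mc Bu$, i.e.\ (\ref{eq:linearised1b}) with $A_{21}=\mc C\mc AP$ and $A_{22}=\mc C\mc AQ$. Finally, the output map in the new coordinates is $\mc CT^{-1}=\begin{bmatrix}\mc CP & \mc CQ\end{bmatrix}=\begin{bmatrix}0 & I\end{bmatrix}$, which is (\ref{eq:linearised1c}). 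I expect the invertibility of $T$ to be the main obstacle, since it is exactly where $\det(\mc C\mc B)\neq0$ must enter; the rest is a direct computation once $M$ is chosen to annihilate $\mc B$.
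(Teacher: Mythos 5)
Your proof takes essentially the same approach as the paper's: both form the transformation $T$ by stacking a matrix that annihilates $\mc B$ on top of $\mc C$, and then read off the block structure of the transformed system. If anything, yours is slightly more complete, since the paper merely asserts that a $\mc C_z$ with $\mc C_z \mc B=0$ making $T$ nonsingular exists, whereas you construct it explicitly (rows spanning the left null space of $\mc B$) and prove the nonsingularity of $T$ from $\det(\mc C\mc B)\neq 0$.
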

\begin{proof}
Since $\det(\mc C \mc B) \neq 0$, then $n\geq p$, $\mc C$ has full row rank and $\mc B$ has full column rank. Let $y=\mc C x$ be a new state. However, we also need a complementary state of dimension $n-p$. Let $z= \mc C_zx$, where $\mc C_z \in \mathbb R^{(n-p)\times n}$ is such that $T=\left[\begin{matrix}\mc C_z \\ \mc C\end{matrix}\right]$ is nonsingular and $\mc C_z \mc B = 0$. The state transformation $\left[\begin{matrix}z \\ y\end{matrix}\right]=Tx$ transforms the system (\ref{eq:original state-space}) into the state-space model
\begin{align*}
	\frac{d}{dt}\left[\begin{matrix}z \\ y\end{matrix}\right]=& T\mc AT^{-1}\left[\begin{matrix}z \\ y\end{matrix}\right]+\left[\begin{matrix}0 \\ \mc C \mc B\end{matrix}\right]u,\\
	y=& \left[\begin{matrix}0& {I}\end{matrix}\right]\left[\begin{matrix}z \\ y\end{matrix}\right].
\end{align*}
Expressing the matrix $T\mc A T^{-1}$ with arbitrary block matrices, we obtain the state-space model (\ref{eq:linearised1}). This completes the proof.
\end{proof}

For the system (\ref{eq:linearised1}), since $\det(\mc C \mc B)\neq 0$, then the input $u$ can be represented as
\begin{equation*}
u=(\mc C \mc B)^{-1}\left(v+(K_1-A_{21})z+(K_2-A_{22})y\right),
\end{equation*}
and the system (\ref{eq:linearised1}) takes the form
\begin{subequations}\label{eq:linearised2}
\begin{align}
\dot z =&\ A_{11}z+A_{12}y,\label{eq:linearised2a}\\
\dot y =&\ K_1z+K_2y+v,\label{eq:linearised2b}\\
y=&\ [\begin{matrix}0& I\end{matrix}]\left[\begin{matrix}z \\ y\end{matrix}\right].\label{eq:linearised2c}
\end{align}
\end{subequations}
We need to find the state feedback matrices $K_1\in \mathbb R^{p\times m}$ and $K_2 \in \mathbb R^{p\times p}$ such that the system (\ref{eq:linearised2}) is NI. We show in the following lemma the necessary and sufficient conditions under which such state feedback matrices exist.

\begin{lemma}\label{thm:rd1 feedback NI}
Suppose the system (\ref{eq:linearised2}) satisfies $\det (A_{11}) \neq 0$. Then there exist $K_1\in \mathbb R^{p\times m}$ and $K_2\in \mathbb R^{p\times p}$ such that the system (\ref{eq:linearised2}) is an NI system with minimal realisation if and only if the pair $(A_{11},A_{12})$ is controllable and $A_{11}$ is Lyapunov stable.
\end{lemma}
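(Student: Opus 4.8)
The plan is to read the NI conditions off the closed-loop realisation directly via the NI Lemma (Lemma~\ref{lemma:NI}). The closed loop (\ref{eq:linearised2}) has the realisation $(A,B,C,0)$ with $A=\left[\begin{matrix}A_{11}&A_{12}\\K_1&K_2\end{matrix}\right]$, $B=\left[\begin{matrix}0\\I\end{matrix}\right]$ and $C=\left[\begin{matrix}0&I\end{matrix}\right]$, so $D=D^T=0$ holds automatically. Partitioning a candidate $Y=Y^T>0$ conformably as $Y=\left[\begin{matrix}Y_{11}&Y_{12}\\Y_{12}^T&Y_{22}\end{matrix}\right]$, the equality $B+AYC^T=0$ splits into $A_{11}Y_{12}+A_{12}Y_{22}=0$ and $K_1Y_{12}+K_2Y_{22}=-I$; when these hold, $AY$ is block lower-triangular with diagonal blocks $M:=A_{11}Y_{11}+A_{12}Y_{12}^T$ and $-I$. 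The first equality forces $Y_{12}=-A_{11}^{-1}A_{12}Y_{22}$ (using $\det A_{11}\neq0$), and a short computation with the Schur complement $W:=Y_{11}-Y_{12}Y_{22}^{-1}Y_{12}^T>0$ collapses this to the key identity $M=A_{11}W$. Hence the $(1,1)$ block of $AY+YA^T$ equals $A_{11}W+WA_{11}^T$, and the entire NI inequality reduces to a single Lyapunov inequality on $A_{11}$.

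For necessity, suppose $K_1,K_2$ render (\ref{eq:linearised2}) minimal and NI. Applying the eigenvector test (Lemma~\ref{thm:eigenvector test for ctrl}) to $B^T=\left[\begin{matrix}0&I\end{matrix}\right]$ shows that an eigenvector of $A^T$ in $\ker B^T$ must restrict to an eigenvector of $A_{11}^T$ in $\ker A_{12}^T$, and conversely; thus controllability of $(A,B)$ is equivalent to controllability of $(A_{11},A_{12})$, giving the first condition. For the second, the NI Lemma furnishes $Y>0$ with $AY+YA^T\le0$; its $(1,1)$ block together with the identity above yields $A_{11}W+WA_{11}^T\le0$ with $W>0$. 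Taking $\mc P=W^{-1}$ and invoking Lemma~\ref{thm:marginally stable} shows $A_{11}$ is Lyapunov stable.

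For sufficiency I would construct the data explicitly. Lyapunov stability supplies $W=W^T>0$ with $A_{11}W+WA_{11}^T\le0$; fixing $Y_{22}=I$, $Y_{12}=-A_{11}^{-1}A_{12}$ and $Y_{11}=W+Y_{12}Y_{12}^T$ gives $Y>0$ whose Schur complement is exactly $W$, so the first equality of $B+AYC^T=0$ holds by construction. Since $Y$ is invertible, any value of the $(2,1)$ block $N:=K_1Y_{11}+K_2Y_{12}^T$ is realised by $[\,K_1\ \ K_2\,]=[\,N\ \ {-I}\,]Y^{-1}$, which also enforces the second equality. Choosing $N=0$ makes $AY+YA^T$ block-diagonal with blocks $A_{11}W+WA_{11}^T\le0$ and $-2I<0$, establishing the NI Lyapunov inequality, and nonsingularity of $A$ follows from the triangular form $\det A\,\det Y=\det(AY)=(-1)^p\det(A_{11})\det(W)\neq0$.

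I expect minimality to be the main obstacle. Controllability is free, being equivalent to controllability of $(A_{11},A_{12})$ as above; observability, however, is equivalent (by Lemma~\ref{thm:eigenvector test for obsv}) to observability of $(A_{11},K_1)$, and for the baseline $N=0$ one finds $K_1=-A_{12}^TA_{11}^{-T}W^{-1}$, which need not be observable. To resolve this I would split $A_{11}$ into its Hurwitz part $A_s$ and its imaginary-axis part $A_c$ (semisimple, by Lyapunov stability) and take $W$ block-diagonal accordingly, with $A_sW_s+W_sA_s^T<0$ strict and $A_cW_c+W_cA_c^T=0$. The latter normalises $A_c$ to a skew-symmetric, hence normal, matrix whose left and right eigenvectors coincide, so that controllability of $(A_{11},A_{12})$ delivers observability of the imaginary-axis modes directly; on these directions the inequality forces the corresponding block of $N$ to vanish, but no freedom is needed there. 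On the Hurwitz block the strict inequality leaves room to perturb $N$ without violating $AY+YA^T\le0$, and since non-observability is a proper algebraic condition while controllability caps every geometric multiplicity of $A_{11}$ at $p$ (PBH), a generic such perturbation renders the Hurwitz modes observable. As the two spectral blocks are disjoint, observability of each gives observability of $(A_{11},K_1)$, completing minimality; this observability step is the crux, whereas the NI inequalities themselves reduce cleanly to $A_{11}W+WA_{11}^T\le0$.
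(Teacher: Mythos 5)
Your proposal is correct and takes essentially the same route as the paper: the same closed-loop realisation and eigenvector-test equivalences, the same skew/Hurwitz splitting of $A_{11}$, and in fact the same construction, since with $Y_{22}=I$ your $Y$, your $K_1=(N-A_{12}^TA_{11}^{-T})W^{-1}$ and your $K_2$ coincide with the paper's (\ref{eq:Y}), (\ref{eq:rd1 K_1}) and (\ref{eq:K_2 definition}) for $\mathcal{Y}_2=I$, with $N$ playing the role of $\left[\begin{matrix}0 & \mathcal{H}_b\end{matrix}\right]$. Your two refinements are sound and worth keeping: the necessity direction argued directly from the $Y$-conditions of Lemma~\ref{lemma:NI} via the identity $M=A_{11}W$ (the paper instead invokes the $X$-LMI from the proof of Lemma 7 in \cite{xiong2010negative}), and the PBH/genericity argument that makes rigorous the paper's informal claim that $\mathcal{H}_b$ can always be chosen so that the Hurwitz modes become observable.
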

\begin{proof}
Let us define the following:
\begin{align}
A=&	\left[\begin{matrix}A_{11}&A_{12}\\K_1&K_2\end{matrix}\right],\label{eq:rd1 A}\\
B=&\left[\begin{matrix}0\\ I\end{matrix}\right],\label{eq:rd1 B}\\
C=& \left[\begin{matrix}0& I\end{matrix}\right].\label{eq:rd1 C}
\end{align}
First, we prove that the controllabilities of the pairs $(A,B)$ and $(A_{11},A_{12})$ are equivalent. According to Lemma \ref{thm:eigenvector test for ctrl}, the pair $(A,B)$ is controllable if and only if any vector in the kernel of $B^T$ is not an eigenvector of $A^T$. Due to the particular form of $B$, this means for $\small\zeta=\left[\begin{matrix}\zeta_1\\0\end{matrix}\right]$, where $\zeta_1\in \mathbb R^{m}$ is any nonzero vector, there does not exist a scalar $\lambda_c$ such that $A^T\zeta=\lambda_c \zeta$. That is $\small\left[\begin{matrix}A_{11}^T\zeta_1 \\ A_{12}^T\zeta_1\end{matrix}\right]\neq\left[\begin{matrix}\lambda_c\zeta_1 \\ 0\end{matrix}\right]$ for any $\zeta_1 \neq 0$ and $\lambda_c$. This means a nonzero vector $\zeta_1$ cannot be an eigenvector of $A_{11}^T$ and in $ker(A_{12}^T)$ at the same time, which is true if and only if $(A_{11},A_{12})$ is controllable.

\textbf{Sufficiency}. We now prove that the observabilities of the pairs $(A_{11},K_1)$ and $(A,C)$ are equivalent. According to Lemma \ref{thm:eigenvector test for obsv}, $(A,C)$ is observable if and only if any vector in the kernel of $C$ is not an eigenvector of $A$. Due to the particular form of $C$, this means for any vector $\small \phi=\left[\begin{matrix}\phi_1\\0\end{matrix}\right]$, where $\phi_1\in\mathbb R^{m}$ is any nonzero vector, there does not exist a scalar $\lambda_o$ such that $A\phi=\lambda_o\phi$. That is $\small\left[\begin{matrix}A_{11}\phi_1 \\ K_1\phi_1\end{matrix}\right]\neq\left[\begin{matrix}\lambda_o\phi_1 \\ 0\end{matrix}\right]$ for any $\phi_1$ and $\lambda_o$. This means no eigenvector of $A_{11}$ is in $ker(K_1)$, which is true if and only if $(A_{11},K_1)$ is observable.

The nonsingular matrix $A_{11}$ is Lyapunov stable (see Definition \ref{def:LS}) if and only if there exists a state transformation $A_{11} \mapsto SA_{11}S^{-1}$ which allows $A_{11}$ to be represented, without loss of generality, as $A_{11}=diag(A_{11}^a,A_{11}^b)$, where
\begin{equation}\label{eq:A_11a A_11b definition}
\begin{aligned}
	& spec( A_{11}^a)\subset j\mathbb R\backslash \{0\}, \quad spec( A_{11}^b)\subset OLHP,\\
	& \textnormal{and} \quad A_{11}^a+( A_{11}^a)^T=0.
\end{aligned}
\end{equation}
Here $A_{11}^a \in \mathbb R^{m_a\times m_a}$ and $A_{11}^b \in \mathbb R^{m_b\times m_b}$, where $m_a \geq 0$, $m_b \geq 0$ and $m_a+m_b=m$. The conditions in (\ref{eq:A_11a A_11b definition}) are achievable according to the proof of Proposition 11.9.6 in \cite{bernstein2009matrix}. Decomposing $A_{12}$ and $K_1$ accordingly using the same state-space transformation, we can write (\ref{eq:linearised2}) as
\begin{align*}
\dot z_1 =&\ A_{11}^az_1+A_{12}^a y,\\
\dot z_2 =&\ A_{11}^bz_2+A_{12}^b y,\\
\dot y =&\ K_1^az_1 + K_1^bz_2+K_2y+v,\\
y=& \left[\begin{matrix}0 & 0& I\end{matrix}\right]\left[\begin{matrix}z_1 \\ z_2 \\ y\end{matrix}\right].
\end{align*}
Since $A_{11}^b$ is Hurwitz, there exist $\mc Y_1^b=(\mc Y_1^b)^T>0$ and $Q_b=Q_b^T>0$ such that
\begin{equation*}
	A_{11}^b\mc Y_1^b+\mc Y_1^b (A_{11}^b)^T=-Q_b.
\end{equation*}
Let $K_1$ be defined as
\begin{equation}\label{eq:rd1 K_1}
	K_1= \left[\begin{matrix}K_1^a & K_1^b\end{matrix}\right],
\end{equation}
where
\begin{equation*}
K_1^a=-{A_{12}^a}^T(A_{11}^a)^{-T},
\end{equation*}
and
\begin{equation}\label{eq:K1b}
K_1^b=\left(-{A_{12}^b}^T(A_{11}^b)^{-T}+\mc H_b\right) (\mc Y_1^b)^{-1}.
\end{equation}
Here, $\mc H_b$ is contained in the set
\begin{equation}\label{eq:S_1}
S_1=\{\mc H_b\in \mathbb R^{p\times m_b}:\mc H_b^T\mc H_b \leq 2 Q_b\}.
\end{equation}
If $(A_{11}, A_{12})$ is controllable, we can always find $\mc H_b$ such that $(A_{11},K_1)$ observable. We prove this in the following. According to Lemma \ref{thm:eigenvector test for ctrl}, the controllability of $(A_{11}, A_{12})$ implies that no eigenvector of $diag\left((A_{11}^a)^T,(A_{11}^b)^T\right)$ is in $ker\left(\left[\begin{matrix}(A_{12}^a)^T & (A_{12}^b)^T\end{matrix}\right]\right)$. This implies that both $(A_{11}^a,A_{12}^a)$ and $(A_{11}^b,A_{12}^b)$ are controllable, which can be proved by applying the eigenvector tests in Lemma \ref{thm:eigenvector test for ctrl} to the vectors $\small\left[\begin{matrix}\zeta_a\\ 0\end{matrix}\right]$ and $\small\left[\begin{matrix}0\\ \zeta_b\end{matrix}\right]$, where $\zeta_a$ and $\zeta_b$ are eigenvectors of $(A_{11}^a)^T$ and $(A_{11}^b)^T$, respectively. According to Lemma \ref{thm:eigenvector test for obsv}, $(A_{11},K_1)$ is observable if and only if for any nonzero vector $\small\phi_K = \left[\begin{matrix}\phi_a \\ \phi_b\end{matrix}\right]$, which is an eigenvector of $A_{11}$, we have $K_1\phi_K\neq 0$. Since $A_{11}^a$ and $A_{11}^b$ have no common eigenvalue, then $\phi_K$ is an eigenvector of $A_{11}$ only if $\phi_a = 0$ or $\phi_b=0$. We consider two cases:

\textbf{Case 1}. $\phi_a \neq 0$ and $\phi_b = 0$. In this case, $\phi_a$ is an eigenvector of $A_{11}^a$; i.e., $A_{11}^a \phi_a = \lambda_a \phi_a$. Since $A_{11}^a+(A_{11}^a)^T=0$, we have $(A_{11}^a)^T\phi_a = -\lambda_a \phi_a$. Hence, $(A_{11}^a)^{-T}\phi_a = -\frac{1}{\lambda_a} \phi_a$. Also, because $(A_{11}^a,A_{12}^a)$ is controllable, $(A_{12}^a)^T\phi_a\neq 0$. Therefore,
\begin{equation*}
K_1\phi_K=K_1^a\phi_a=-{A_{12}^a}^T(A_{11}^a)^{-T} \phi_a = \frac{1}{\lambda_a}{A_{12}^a}^T	\phi_a \neq 0.
\end{equation*}

\textbf{Case 2}. $\phi_a = 0$ and $\phi_b \neq 0$. In this case, $\phi_b$ is an eigenvector of $A_{11}^b$. Because $-{A_{12}^b}^T(A_{11}^b)^{-T}$ in (\ref{eq:K1b}) is fixed, and $S_1$ has nonempty interior due to the positive definiteness of $Q_b$, then we can always find $\mc H_b$ such that
\begin{align*}
K_1\phi_K=&K_1^b\phi_b \\
=&\left(-{A_{12}^b}^T\left(A_{11}^b\right)^{-T}+ \mc H_b\right) (\mc Y_1^b)^{-1}\phi_b \neq 0,
\end{align*}
for all $\phi_b$ that are eigenvectors of $A_{11}^b$. Therefore, with this particular choice of $\mc H_b$, we have $(A_{11},K_1)$ observable. In this case, $(A,C)$ is also observable. The controllability of $(A_{11},A_{12})$ implies that the realisation $(A,B,C)$ in (\ref{eq:rd1 A}), (\ref{eq:rd1 B}) and (\ref{eq:rd1 C}) is minimal.

Let $K_2$ be defined as
\begin{equation}\label{eq:K_2 definition}
K_2 = K_1A_{11}^{-1}A_{12}-\mc Y_2^{-1},
\end{equation}
where $\mc Y_2 \in \mathbb R^{p\times p}$ can be any symmetric positive definite matrix; i.e., $\mc Y_2=\mc Y_2^T >0$. We apply Lemma \ref{lemma:NI} in the following to prove that the system (\ref{eq:linearised2}) is an NI system. We construct the matrix $Y$ as follows:
\begin{equation}\label{eq:Y}
Y=\left[\begin{matrix}\mc Y_1 + A_{11}^{-1}A_{12}\mc Y_2 A_{12}^TA_{11}^{-T} & -A_{11}^{-1}A_{12}\mc Y_2\\-\mc Y_2 A_{12}^TA_{11}^{-T} & \mc Y_2\end{matrix}\right],
\end{equation}
where $\mc Y_1=diag(y_1^a I,\mc Y_1^b)$ with $y_1^a>0$ being a scalar. We have $Y>0$ because $\mc Y_2>0$ and the Schur complement of the block $\mc Y_2$ is
\begin{align*}
	Y/& \mc Y_2\notag\\
	 =& \mc Y_1 + A_{11}^{-1}A_{12}\mc Y_2 A_{12}^TA_{11}^{-T}-A_{11}^{-1}A_{12}\mc Y_2{\mc Y_2}^{-1}\mc Y_2 A_{12}^TA_{11}^{-T}\notag\\
	 =& \mc Y_1 >0.
\end{align*}
For Condition 1 in Lemma \ref{lemma:NI}, the determinant of the matrix $A$ in (\ref{eq:rd1 A}) is
\begin{align*}
	\det{(A)}=& \det{(A_{11})}\det{(K_2-K_1A_{11}^{-1}A_{12})}\notag\\
	=&\det{(A_{11})}\det{(-\mc Y_2^{-1})}.
\end{align*}
Because $\det(A_{11})\neq 0$ and $\mc Y_2>0$, $\det {(A)}\neq 0$. Also, there is no input feedthrough in the output equation (\ref{eq:linearised2c}). Hence, Condition 1 in Lemma \ref{lemma:NI} is satisfied. For Condition 2 in Lemma \ref{lemma:NI}, with $Y$ defined in (\ref{eq:Y}), we have
\begin{equation*}
AY=	\left[\begin{matrix}y_1^a A_{11}^a & 0 & 0\\ 0 & A_{11}^b\mc Y_1^b & 0 \\0 & \mc H_b & -I\end{matrix}\right].
\end{equation*}
Therefore, $AYC^T=-B$ and
\begin{equation*}
AY+YA^T =\left[\begin{matrix}0 & 0 & 0\\ 0 & -Q_b &  \mc H_b^T \\0 &  \mc H_b & -2I\end{matrix}\right].
\end{equation*}
For the matrix $\left[\begin{matrix} Q_b & -\mc H_b^T \\ -\mc H_b & 2I\end{matrix}\right]$, we have $2I>0$ and the Schur complement of the block $2I$ is
\begin{align*}
	Q_b-\frac{1}{2}\mc H_b^TH_b \geq 0,
\end{align*}
where (\ref{eq:S_1}) is also used. Therefore, $AY+YA^T\leq 0$. Condition 2 in Lemma \ref{lemma:NI} is satisfied. Hence, the system (\ref{eq:linearised2}) is an NI system with minimal realisation.

\textbf{Necessity}. If the realisation $(A,B,C)$ is minimal and the system (\ref{eq:linearised2}) is NI, then according to the proof of Lemma \ref{lemma:NI} (see Lemma 7 in \cite{xiong2010negative}), there exists an $X=X^T > 0$ such that
\begin{equation*}\label{eq:rd1 LMI}
	\left[\begin{matrix} XA+A^TX & XB-A^TC^T \\ B^TX-CA & -(CB+B^TC^T)\end{matrix}\right] \leq 0.
\end{equation*}
Therefore, for any $z$, $y$ and $v$, we have
\begin{equation}\label{eq:rd1 LMI z y v}
	\left[\begin{matrix} z \\ y \\ v\end{matrix}\right]^T\left[\begin{matrix} XA+A^TX & XB-A^TC^T \\ B^TX-CA & -(CB+B^TC^T)\end{matrix}\right]\left[\begin{matrix} z \\ y \\ v\end{matrix}\right] \leq 0.
\end{equation}
Let $\small X=\left[\begin{matrix} X_{11} & X_{12}\\ X_{12}^T & X_{22}\end{matrix}\right]$ and substitute (\ref{eq:rd1 A}), (\ref{eq:rd1 B}) and (\ref{eq:rd1 C}) into (\ref{eq:rd1 LMI z y v}). Also, let $y=0$ and $v=-K_1z$. We have that
\begin{equation*}
	z^T(X_{11}A_{11}+A_{11}^TX_{11})z \leq 0
\end{equation*}
for any $z$, which implies that $X_{11}A_{11}+A_{11}^TX_{11}\leq 0$. Considering $X=X^T>0$, we have $X_{11}>0$. Also, since $\det(A_{11})\neq 0$, then according to Lemma \ref{thm:marginally stable}, $A_{11}$ is Lyapunov stable. Also, the controllability of $(A,B)$ implies the controllability of $(A_{11},A_{12})$. This completes the proof.
\end{proof}

\begin{remark}
The inverse of the matrix $Y$ in (\ref{eq:Y}) is
\begin{equation*}
P=Y^{-1}=\left[\begin{matrix} \mc P_1 & \mc P_1 A_{11}^{-1}A_{12} \\ A_{12}^TA_{11}^{-T}\mc P_1 & \mc P_2 + A_{12}^TA_{11}^{-T}\mc P_1 A_{11}^{-1}A_{12}\end{matrix}\right],
\end{equation*}
where $\mc P_1=\mc Y_1^{-1}>0$ and $\mc P_2=\mc Y_2^{-1}>0$. This enable us to define a storage function for the system (\ref{eq:linearised2}) as
\begin{equation*}
V_1(z,y)=\frac{1}{2}	\left[\begin{matrix} z^T & y^T\end{matrix}\right]P\left[\begin{matrix} z \\ y\end{matrix}\right],
\end{equation*}
which satisfies
\begin{equation*}
\dot V_1(z,y) \leq v^T \dot y.	
\end{equation*}
\end{remark}

\begin{remark}
	In the case that $m=0$, the system (\ref{eq:linearised1}) becomes
\begin{subequations}\label{eq:rd1 no z}
\begin{align}
	\dot y =&\ A_{22}y+\mc C \mc B u,\\
	y=&\ y.
\end{align}	
\end{subequations}	
Let the input be
\begin{equation*}
u=(\mc C \mc B)^{-1}(v_0+(K_0-A_{22})y).
\end{equation*}
Then, the system (\ref{eq:rd1 no z}) becomes
\begin{subequations}\label{eq:rd1 no z final}
\begin{align}
	\dot y =&\ K_0 y+v_0,\\
	y=&\ y.
\end{align}
\end{subequations}
Choose $K_0$ to be such that $K_0=K_0^T<0$. The matrix $Y_0=-K_0^{-1}$ then satisfies the conditions in (\ref{eq:NI condition}) and (\ref{eq:SSNI condition}), while all other conditions in Lemma \ref{lemma:NI} and \ref{thm:SSNI} are satisfied. Therefore, there exists a $K_0$ such that the system (\ref{eq:rd1 no z final}) is an SSNI system.
\end{remark}

\subsection{Relative Degree Two Case}
\label{section:rd2}
Suppose the system (\ref{eq:original state-space}) has relative degree two, that is $\mc C \mc B=0$ and $\det(\mc C \mc A \mc B)\neq 0$. Then $m:=n-2p \geq 0$ and hence without loss of generality, the system (\ref{eq:original state-space}) can be considered to be in the SCB (see \cite{sannuti1987special,chen2004linear})
\begin{subequations}\label{eq:linearised3}
\begin{align}
\dot {z} =&\  A_{11} z+ A_{12}x_1+ A_{13}x_2,\label{eq:linearised3a}\\
\dot x_1 =&\ x_2,\label{eq:linearised3b}\\
\dot x_2 =&\  A_{31} z +  A_{32} x_1+ A_{33}x_2+\mc C\mc A\mc B u,\label{eq:linearised3c}\\
y =& \left[\begin{matrix}0& I & 0\end{matrix}\right]\left[\begin{matrix} z \\ x_1 \\ x_2\end{matrix}\right],\label{eq:linearised3d}
\end{align}
\end{subequations}
where $ z\in \mathbb R^{m}$ and $x_1,x_2 \in \mathbb R^p$. This can be realized using a state transformation, according to Lemma \ref{lemma:rd2 state transformation}.
\begin{lemma}\label{lemma:rd2 state transformation}
Suppose the system (\ref{eq:original state-space}) has relative degree two, that is, $\mc C \mc B=0$ and $\det(\mc C \mc A \mc B) \neq 0$. Then there exists a state transformation such that the resulting transformed system is of the form (\ref{eq:linearised3}).
\end{lemma}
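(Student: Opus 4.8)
The plan is to mirror the relative-degree-one construction of Lemma~\ref{lemma:rd1 state transformation}, but now using two output-related coordinates instead of one. Since the system has relative degree two, the natural new coordinates are $x_1 := \mc C x = y$ and $x_2 := \mc C \mc A x$, together with an internal coordinate $z := \mc C_z x$, where $\mc C_z \in \mathbb R^{m\times n}$ (with $m=n-2p$) is to be chosen so that $\mc C_z \mc B = 0$ and the stacked matrix $T = [\,\mc C_z^T\ \ \mc C^T\ \ (\mc C\mc A)^T\,]^T$ is nonsingular. The coordinate $x_2$ plays the role of $\dot y$: indeed $\dot x_1 = \mc C\dot x = \mc C \mc A x + \mc C\mc B u = \mc C\mc A x = x_2$, where the input drops out precisely because $\mc C \mc B = 0$.

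Granting that $T$ is invertible, verifying that the transformed system has the form (\ref{eq:linearised3}) is routine. I would compute $\dot z = \mc C_z \mc A x + \mc C_z \mc B u = \mc C_z\mc A x$ (the input again vanishes since $\mc C_z \mc B = 0$) and rewrite $\mc C_z \mc A x = \mc C_z \mc A T^{-1}[z^T\ x_1^T\ x_2^T]^T$ to read off $A_{11}, A_{12}, A_{13}$; similarly $\dot x_2 = \mc C\mc A\dot x = \mc C\mc A^2 x + \mc C\mc A\mc B u$ yields $A_{31},A_{32},A_{33}$ together with the input matrix $\mc C\mc A\mc B$; and $y = \mc C x = x_1$ gives the output equation. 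Expressing $T\mc AT^{-1}$ with arbitrary block entries, exactly as in the proof of Lemma~\ref{lemma:rd1 state transformation}, completes this bookkeeping.

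The substantive step, and the main obstacle, is showing that a $\mc C_z$ with $\mc C_z\mc B=0$ can be chosen to make $T$ nonsingular; that is, the null-space constraint must be compatible with invertibility. All the structure comes from $\mc C\mc B=0$ and $\det(\mc C\mc A\mc B)\neq0$. Since $\mc C\mc B=0$, every row of $\mc C$ lies in $W:=ker(\mc B^T)$; moreover $\mc B$ has full column rank $p$ (if $\mc B v=0$ then $\mc C\mc A\mc B v=0$, so $v=0$), whence $\dim W=n-p$, and $\mc C$ itself has full row rank $p$ (if $\alpha^T\mc C=0$ then $\alpha^T\mc C\mc A\mc B=0$, so $\alpha=0$). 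On the other hand, no nonzero row-combination $\beta^T\mc C\mc A$ can lie in $W$: membership would force $\beta^T\mc C\mc A\mc B=0$ and hence $\beta=0$. Thus the row space of $\mc C\mc A$ has dimension $p$ and meets $W$ only at the origin, so $W$ and the row space of $\mc C\mc A$ are complementary subspaces of $\mathbb R^n$.

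With this decomposition in hand, I would extend the $p$ independent rows of $\mc C$ to a basis of the $(n-p)$-dimensional space $W$ by adjoining $m=n-2p$ further vectors, and take those as the rows of $\mc C_z$; then $\mc C_z\mc B=0$ holds automatically, and the row spaces of $\mc C_z$ and $\mc C$ together span $W$ as a direct sum. Combining this with the complementarity established above shows that $\mathbb R^n$ is the direct sum of the row spaces of $\mc C_z$, $\mc C$, and $\mc C\mc A$, so $T$ is nonsingular, which is all that the verification in the second paragraph required. The one genuinely new ingredient compared with Lemma~\ref{lemma:rd1 state transformation} is precisely this compatibility between the constraint $\mc C_z\mc B=0$ and the invertibility of $T$, and it is exactly where relative degree two (rather than one) enters the argument.
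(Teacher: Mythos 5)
Your proposal is correct and follows essentially the same route as the paper: the same new coordinates $x_1=\mc C x$, $x_2=\mc C\mc A x$, $z=\mc C_z x$ with $\mc C_z\mc B=0$, the same stacked transformation $T$, and the same bookkeeping to read off the block form. The only difference is one of rigor rather than approach: the paper simply asserts that a $\mc C_z$ with $\mc C_z\mc B=0$ making $T$ nonsingular exists, whereas you actually prove it via the decomposition $\mathbb R^n=\ker(\mc B^T)\oplus\mathrm{rowspace}(\mc C\mc A)$ and by extending the rows of $\mc C$ to a basis of $\ker(\mc B^T)$, which supplies the detail the paper leaves implicit (and, as a byproduct, establishes $n\geq 2p$).
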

\begin{proof}
Since $\det(\mc C \mc A \mc B) \neq 0$, then both $\mc C$ and $\mc C \mc A$ have full row rank. Let $x_1=y=\mc C x$ and $x_2=\dot x_1= \mc C \mc Ax$. Considering that $\mc C \mc B=0$ and $\mc C \mc A \mc B$ is nonsingular, we have that the matrix
$\left[\begin{matrix} \mc C \\ \mc C \mc A\end{matrix}\right]$ is full row rank. We also need a complementary state $z$ of dimension $n-2p$. Let $z=\mc C_z x$, where $C_z$ is such that $T=\left[\begin{matrix}\mc C_z \\ \mc C \\ \mc C \mc A\end{matrix}\right]$ is nonsingular and $\mc C_z \mc B=0$. Under the state transformation $\left[\begin{matrix}z \\ x_1 \\ x_2 \end{matrix}\right]=Tx$, the system (\ref{eq:original state-space}) now becomes
\begin{align*}
	\frac{d}{dt}\left[\begin{matrix}z \\ x_1 \\ x_2 \end{matrix}\right]=&\  T\mc AT^{-1}\left[\begin{matrix}z \\ x_1 \\ x_2 \end{matrix}\right] + \left[\begin{matrix}0 \\ 0 \\ \mc C \mc A \mc B \end{matrix}\right]u,\\
	y =& \left[\begin{matrix}0& I & 0\end{matrix}\right]\left[\begin{matrix} z \\ x_1 \\ x_2\end{matrix}\right].
\end{align*}
We express $T \mc A T^{-1}$ using arbitrary block matrices and we also use the fact that $\dot x_1 = x_2$. Therefore, we obtain the state-space model (\ref{eq:linearised3}). This completes the proof.
\end{proof}

Let the input of system (\ref{eq:linearised3}) be
\begin{equation}\label{eq:input}
\small u=(\mc C \mc A \mc B)^{-1}(v+(K_1-A_{31}) z+(K_2-A_{32})x_1+(K_3-A_{33})x_2),	
\end{equation}
then the system (\ref{eq:linearised3}) takes the form
\begin{subequations}\label{eq:linearised4}
\begin{align}
\dot { z} =&\  A_{11} z+ A_{12}x_1+ A_{13}x_2,\label{eq:linearised4a}\\
\dot x_1 =&\ x_2,\label{eq:linearised4b}\\
\dot x_2 =&\  K_1  z+ K_2x_1+  K_3x_2+  v,\label{eq:linearised4c}\\
y =&\ [\begin{matrix}0& I & 0\end{matrix}]\left[\begin{matrix} z \\ x_1 \\ x_2\end{matrix}\right].\label{eq:linearised4d}
\end{align}
\end{subequations}
We need to find the state feedback matrices $ K_1\in \mathbb R^{p\times  m}$, $ K_2 \in \mathbb R^{p\times p}$ and $ K_3 \in \mathbb R^{p\times p}$ such that the system (\ref{eq:linearised4}) is NI. We show in the following lemma the necessary and sufficient conditions under which such state feedback matrices exist.

\begin{lemma}\label{thm:rd2 feedback NI}
Suppose the system (\ref{eq:linearised4}) satisfies $\det ( A_{11}) \neq 0$. Then there exist $ K_1$, $ K_2$ and $K_3$ such that the system (\ref{eq:linearised4}) is an NI system with minimal realisation if and only if $ A_{11}$ is Lyapunov stable and the pair $( A_{11}, A_{11} A_{13}+ A_{12})$ is controllable.
\end{lemma}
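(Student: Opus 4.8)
The plan is to imitate the proof of Lemma~\ref{thm:rd1 feedback NI}, now regarding the chain $\dot x_1 = x_2$ as an augmented $2p$-dimensional output subsystem sitting on top of the $m$-dimensional internal dynamics. First I would assemble the closed-loop realisation
\begin{equation*}
A = \left[\begin{matrix} A_{11} & A_{12} & A_{13} \\ 0 & 0 & I \\ K_1 & K_2 & K_3\end{matrix}\right], \quad B = \left[\begin{matrix} 0 \\ 0 \\ I\end{matrix}\right], \quad C = \left[\begin{matrix} 0 & I & 0\end{matrix}\right],
\end{equation*}
and reduce controllability and observability of $(A,B,C)$ to conditions on the internal dynamics via the eigenvector tests (Lemmas~\ref{thm:eigenvector test for ctrl} and~\ref{thm:eigenvector test for obsv}). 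For controllability, a vector $\zeta = [\zeta_1^T\ \zeta_2^T\ 0]^T$ in $ker(B^T)$ is an eigenvector of $A^T$ for some $\lambda$ exactly when $A_{11}^T\zeta_1 = \lambda\zeta_1$, $\zeta_2 = -A_{13}^T\zeta_1$ and $(A_{11}A_{13}+A_{12})^T\zeta_1 = 0$; eliminating $\zeta_2$ and using $\lambda\zeta_1 = A_{11}^T\zeta_1$ is precisely what produces the matrix $A_{11}A_{13}+A_{12}$, so $(A,B)$ is controllable if and only if $(A_{11}, A_{11}A_{13}+A_{12})$ is. The observability reduction is identical to the relative degree one case: $(A,C)$ is observable if and only if $(A_{11},K_1)$ is.

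For sufficiency I would first perform the change of internal coordinate $\tilde z = z - A_{13}x_1$, which absorbs the $A_{13}x_2$ term (since $\dot x_1 = x_2$) and replaces the internal input matrix $A_{12}$ by $\tilde A_{12} := A_{11}A_{13}+A_{12}$, leaving a system whose $(\tilde z, x_1)$ part has exactly the relative degree one shape treated in Lemma~\ref{thm:rd1 feedback NI}; the feedback gains remain free after this transformation. I would then reuse that recipe: decompose $A_{11} = \mathrm{diag}(A_{11}^a, A_{11}^b)$ as in~(\ref{eq:A_11a A_11b definition}), choose $K_1$ through a free parameter $\mc H_b \in S_1$ (see~(\ref{eq:S_1})) so that $(A_{11},K_1)$ is observable, and define $K_2$ by the analogue of~(\ref{eq:K_2 definition}). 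The new ingredient is the double-integrator block, which I would handle by setting $K_3 = -I$ to inject damping into the $x_2$-channel and by building $Y$ (partitioned over $(\tilde z, x_1, x_2)$) with the extra blocks forced to $Y_{33}=I$, $Y_{23}=0$ and $Y_{13}=0$, keeping the rd1-type blocks in the $(\tilde z, x_1)$ corner. A direct computation should then show $Y>0$, $\det(A)\neq 0$, $B + AYC^T = 0$, and that $AY + YA^T$ collapses to the same negative-semidefiniteness requirement as in the rd1 proof, namely a Schur complement of the form $Q_b - \tfrac12\mc H_b^T\mc H_b \geq 0$, which holds by~(\ref{eq:S_1}); Lemma~\ref{lemma:NI} then yields the NI property, and the controllability/observability reductions give minimality.

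For necessity I would invoke the LMI characterisation used in Lemma~\ref{thm:rd1 feedback NI} (from the proof of Lemma~\ref{lemma:NI}). Since here $CB = 0$, probing that inequality with $x_1 = x_2 = 0$ and $v = -K_1 z$ cancels every $K_1$-dependent term and leaves $X_{11}A_{11} + A_{11}^T X_{11} \leq 0$ with $X_{11} > 0$; combined with $\det(A_{11})\neq 0$ and Lemma~\ref{thm:marginally stable} this forces $A_{11}$ to be Lyapunov stable, while controllability of $(A,B)$, implied by minimality, gives controllability of $(A_{11}, A_{11}A_{13}+A_{12})$ through the reduction above.

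I expect the main obstacle to be the sufficiency construction of $Y$: one must discover the correct block pattern so that the rigid block $\left(\begin{smallmatrix}0 & I\\ K_2 & K_3\end{smallmatrix}\right)$ does not obstruct the two conditions of Lemma~\ref{lemma:NI} simultaneously. Concretely, the equality $B + AYC^T = 0$ forces $Y_{23}=0$, and requiring $AY+YA^T\leq 0$ in the presence of the exact integrator $\dot x_1 = x_2$ forces $Y_{33}=I$ and $Y_{13}=0$, after which the damping supplied by $K_3$ must be matched against $\mc H_b$ so that the problem reduces exactly to the Schur-complement inequality already settled in the relative degree one case. Verifying that these constraints remain mutually compatible with $Y>0$ is the crux; everything else is a transcription of the rd1 argument.
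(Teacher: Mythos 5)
Your proposal is correct and follows essentially the same route as the paper's proof: the same closed-loop realisation $(A,B,C)$, the same eigenvector-test reductions showing $(A,B)$ controllable $\iff$ $(A_{11},A_{11}A_{13}+A_{12})$ controllable and $(A,C)$ observable $\iff$ $(A_{11},K_1)$ observable, the same block structure for $Y$ (with the forced blocks $Y_{13}=0$, $Y_{23}=0$, $Y_{33}=I$, exactly as in (\ref{eq:Y rd2})), the same $\mc H_b$-freedom argument for observability, and the same LMI-probing necessity argument with $x_1=x_2=0$, $v=-K_1 z$. Your preliminary change of coordinates $\tilde z = z - A_{13}x_1$ (which absorbs $A_{13}$ and explains the appearance of $A_{11}A_{13}+A_{12}$) and the specialisation $K_3=-I$ with $\mc H_b\in S_1$ are only cosmetic simplifications of the paper's construction, which instead keeps the original coordinates, allows any $K_3$ with $K_3+K_3^T<0$ via the matrix $E$ in (\ref{eq:E}), and folds the $A_{13}$ terms directly into the gain formulas (\ref{eq:rd2 K_1})--(\ref{eq:rd2 K1b}).
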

\begin{proof}
Let us define the following:
\begin{align}
 A=&	\left[\begin{matrix} A_{11}& A_{12}& A_{13}\\ 0&0&I\\  K_1&  K_2&  K_3\end{matrix}\right],\label{eq:rd2 A}\\
 B=&\left[\begin{matrix}0\\0\\ I\end{matrix}\right],\label{eq:rd2 B}\\
 C=& \left[\begin{matrix}0 & I &0\end{matrix}\right]\label{eq:rd2 C}.
\end{align}
First, we prove that the controllabilities of the pairs $( A,  B)$ and $( A_{11}, A_{11} A_{13}+ A_{12})$ are equivalent. According to Lemma \ref{thm:eigenvector test for ctrl}, the pair $( A,  B)$ is controllable if and only if any vector in the kernel of $ B^T$ is not an eigenvector of $ A^T$. Due to the particular form of $ B$, this means for the vector $\small\eta=\left[\begin{matrix}\eta_1\\ \eta_2\\ 0\end{matrix}\right]\neq 0$ with $\eta_1\in \mathbb R^{ m}$ and $\eta_2\in \mathbb R^p$ being any vectors that are not both zero, there does not exist a scalar $ \lambda_c$ such that $ A^T \eta =  \lambda_c \eta$. That is $\small\left[\begin{matrix} A_{11}^T\eta_1\\  A_{12}^T\eta_1\\  A_{13}^T\eta_1+\eta_2\end{matrix}\right]\neq\left[\begin{matrix} \lambda_c\eta_1\\  \lambda_c\eta_2\\ 0\end{matrix}\right]$ for any $\eta_1$, $\eta_2$ and $ \lambda_c$. This means if $\eta_1$ is an eigenvector of $ A_{11}^T$ and $\eta_2=- A_{13}^T\eta_1$, we have $ A_{12}^T\eta_1 \neq  \lambda_c \eta_2=- \lambda_c  A_{13}^T\eta_1=- A_{13}^T A_{11}^T \eta_1$. That is $( A_{13}^T A_{11}^T+ A_{12}^T)\eta_1\neq 0$, which is true if and only if $( A_{11}, A_{11} A_{13}+ A_{12})$ is controllable.

\textbf{Sufficiency}. We now prove that the observabilities of the pairs $( A, C)$ and $( A_{11}, K_1)$ are equivalent. According to Lemma \ref{thm:eigenvector test for obsv}, $( A, C)$ is observable if and only if any vector in the kernel of $ C$ is not an eigenvector of $ A$. Due to the particular form of $ C$, this means for the vector $\small\delta=\left[\begin{matrix}\delta_1\\0\\ \delta_2\\ \end{matrix}\right]\neq 0$ with $\delta_1\in \mathbb R^{ m}$ and $\delta_2\in \mathbb R^{p}$ being any vectors that are not both zero, there does not exist a scalar $ \lambda_o$ such that $ A \delta =  \lambda_o \delta$. That is $\small\left[\begin{matrix} A_{11}\delta_1+ A_{13}\delta_2\\ \delta_2\\  K_1\delta_1+ K_3\delta_2\end{matrix}\right]\neq \left[\begin{matrix} \lambda_o\delta_1\\0\\  \lambda_o\delta_2 \end{matrix}\right]$ for any $\delta_1$, $\delta_2$ and $ \lambda_o$. Let $\delta_2=0$, then $\small\left[\begin{matrix} A_{11}\delta_1\\  K_1\delta_1\end{matrix}\right]\neq \left[\begin{matrix} \lambda_o\delta_1\\0\end{matrix}\right]$ for any $\delta_1\neq 0$ and $\lambda_o$. This means that any nonzero vector cannot be an eigenvector of $ A_{11}$ and in $ker( K_1)$ at the same time, which is true if and only if $( A_{11}, K_1)$ is observable.

The nonsingular matrix $A_{11}$ is Lyapunov stable (see Definition \ref{def:LS}) if and only if there exists a state transformation $A_{11} \mapsto SA_{11}S^{-1}$ which allows $A_{11}$ to be represented, without loss of generality, as $ A_{11}=diag( A_{11}^a, A_{11}^b)$, where
\begin{equation}\label{eq:rd2 A_11a A_11b definition}
\begin{aligned}
	& spec( A_{11}^a)\subset j\mathbb R\backslash \{0\}, \quad spec( A_{11}^b)\subset OLHP,\\
	& \textnormal{and} \quad A_{11}^a+( A_{11}^a)^T=0.
\end{aligned}
\end{equation}
Here $A_{11}^a \in \mathbb R^{m_a\times m_a}$ and $A_{11}^b \in \mathbb R^{m_b\times m_b}$, where $m_a \geq 0$, $m_b \geq 0$ and $m_a+m_b=m$. The conditions in (\ref{eq:rd2 A_11a A_11b definition}) are achievable according to the proof of Proposition 11.9.6 in \cite{bernstein2009matrix}. Decomposing $ A_{12}$, $ A_{13}$ and $ K_1$ accordingly using the same state-space transformation, we can write (\ref{eq:linearised4}) as
\begin{align*}
\dot { z}_1 =&\  A_{11}^a z_1+ A_{12}^a x_1+ A_{13}^a x_2,\\
\dot { z}_2 =&\  A_{11}^b z_2+ A_{12}^b x_1+ A_{13}^b x_2,\\
\dot x_1 = &\ x_2\\
\dot x_2 =&\  K_1^a  z_1 +  K_1^b  z_2+ K_2 x_2 +  K_3 x_2+  v,\\
y=&\ [\begin{matrix}0&0& I&0\end{matrix}]\left[\begin{matrix} z_1 \\  z_2 \\ x_1\\ x_2\end{matrix}\right].
\end{align*}
Since $ A_{11}^b$ is Hurwitz, there exist ${\mc Y}_1^b=({\mc Y}_1^b)^T>0$ and $ Q_b= Q_b^T>0$ such that
\begin{equation*}
	 A_{11}^b {\mc Y}_1^b+{\mc Y}_1^b ( A_{11}^b)^T=- Q_b.
\end{equation*}
Let $ K_3\in \mathbb R^{p\times p}$ be any matrix such that
\begin{equation}\label{eq:rd2 K_3}
K_3+ K_3^T<0.
\end{equation}
Let
\begin{equation}\label{eq:E}
E:=\left(-( K_3+ K_3^T)\right)^{\frac{1}{2}},	
\end{equation}
and hence $E=E^T>0$. Let $ K_1$ be defined as
\begin{equation}\label{eq:rd2 K_1}
	 K_1=\left[\begin{matrix} K_1^a &  K_1^b\end{matrix}\right].
\end{equation}
where
\begin{equation*}
K_1^a=-( A_{12}^a)^T( A_{11}^a)^{-T}-( A_{13}^a)^T,
\end{equation*}
and
\begin{equation}\label{eq:rd2 K1b}
K_1^b=\left(-( A_{12}^b)^T( A_{11}^b)^{-T}-( A_{13}^b)^T+ E{\mc H}_b\right) ({\mc Y}_1^b)^{-1}.
\end{equation}
Here, $\mc H_b$ is contained in the set
\begin{equation}\label{eq:S_2}
S_2=\{\mc H_b\in \mathbb R^{p\times m_b}:\mc H_b^T\mc H_b \leq Q_b\}.	
\end{equation}
We prove in the following that there exist $\mc H_b$ and $ K_3$ such that $(A_{11}, K_1)$ is observable. According to Lemma \ref{thm:eigenvector test for ctrl}, the controllability of $( A_{11}, A_{11} A_{13}+ A_{12})$ implies that no eigenvector of $diag\left(( A_{11}^a)^T, ( A_{11}^b)^T\right)$ is in the kernel of $\small\left[\begin{matrix}( A_{13}^a)^T( A_{11}^a)^T+( A_{12}^a)^T & ( A_{13}^b)^T( A_{11}^b)^T+( A_{12}^b)^T\end{matrix}\right]$. This implies that both $( A_{11}^a,  A_{11}^a A_{13}^a+ A_{12}^a)$ and $( A_{11}^b,  A_{11}^b A_{13}^b+ A_{12}^b)$ are controllable, which can be proved by applying the eigenvector tests in Lemma \ref{thm:eigenvector test for ctrl} to the vectors $\small\left[\begin{matrix}\eta_a \\ 0\end{matrix}\right]$ and $\small\left[\begin{matrix}0 \\ \eta_b\end{matrix}\right]$, where $\eta_a$ and $\eta_b$ are eigenvectors of $( A_{11}^a)^T$ and $( A_{11}^b)^T$, respectively. According to Lemma \ref{thm:eigenvector test for obsv}, $( A_{11}, K_1)$ is observable if and only if for any nonzero vector $\small\delta_K=\left[\begin{matrix}\delta_a \\ \delta_b\end{matrix}\right]$, which is an eigenvector of $ A_{11}$, we have $ K_1 \delta_K\neq 0$. Since $ A_{11}^a$ and $ A_{11}^b$ have no common eigenvalue, then $\delta_K$ is an eigenvector of $ A_{11}$ only if $\delta_a=0$ or $\delta_b=0$. We consider two cases:

\textbf{Case 1}. $\delta_a\neq 0$ and $\delta_b=0$. In this case, $\delta_a$ is an eigenvector of $ A_{11}^a$; i.e., $ A_{11}^a\delta_a=\lambda_a \delta_a$. Since $ A_{11}^a+( A_{11}^a)^T=0$, we have $( A_{11}^a)^T \delta_a=-\lambda_a \delta_a$. Hence, $(A_{11}^a)^{-T}\delta_a=-\frac{1}{ \lambda_a}\delta_a$. Also, because $( A_{11}^a,  A_{11}^a A_{13}^a+ A_{12}^a)$ is controllable, $\left(( A_{13}^a)^T( A_{11}^a)^T+( A_{12}^a)^T\right)\delta_a\neq 0$. Therefore,
\begin{align*}
	 K_1\delta_K= K_1^a\delta_a=& \left(-( A_{12}^a)^T( A_{11}^a)^{-T}-( A_{13}^a)^T\right)\delta_a\notag\\
	=&-\left(( A_{13}^a)^T( A_{11}^a)^T+( A_{12}^a)^T\right)(A_{11}^a)^{-T}\delta_a\notag\\
	=&\ \frac{1}{ \lambda_a}\left(( A_{13}^a)^T( A_{11}^a)^T+( A_{12}^a)^T\right)\delta_a\neq 0.
\end{align*}

\textbf{Case 2}. $\delta_a=0$ and $\delta_b \neq 0$. In this case, $\delta_b$ is an eigenvector of $A_{11}^b$.
Because $-( A_{12}^b)^T( A_{11}^b)^{-T}-( A_{13}^b)^T$ in (\ref{eq:rd2 K1b}) is fixed, $ E >0$ and $S_2$ has nonempty interior due to the positive definiteness of $Q_b$, then for any $K_3$, we can always find $\mc H_b$ such that
\begin{align*}\label{eq:rd2 case 2}
 K_1\delta_K =&\  K_1^b\delta_b \notag\\
 =& \left(-( A_{12}^b)^T( A_{11}^b)^{-T}-( A_{13}^b)^T+ E{\mc H}_b\right)({\mc Y}_1^b)^{-1}\delta_b\notag\\
\neq &\ 0,
\end{align*}
for all $\delta_b$ that are eigenvalues of $A_{11}^b$. We conclude that there exists $\mc H_b$ and $ K_3$ such that $( A_{11}, K_1)$ is observable. We choose such $ K_3$ and $ \mc H_b$ for the following proof.

In this case, $(A, C)$ is also observable. The controllability of $( A_{11}, A_{11} A_{13}+ A_{12})$ implies that the realisation $( A, B, C)$ in (\ref{eq:rd2 A}), (\ref{eq:rd2 B}) and (\ref{eq:rd2 C}) is minimal. Let $ K_2$ be defined as
\begin{equation}\label{eq:rd2 K_2 definition}
 K_2 =  K_1 A_{11}^{-1} A_{12}-{\mc Y}_2^{-1},
\end{equation}
where ${\mc Y}_2 \in \mathbb R^{p\times p}$ can be any symmetric positive definite matrix; i.e., ${\mc Y}_2={\mc Y}_2^T >0$. Now, we apply Lemma \ref{lemma:NI} to prove that the system (\ref{eq:linearised4}) is an NI system. First, we construct the matrix $ Y$ as
\begin{equation}\label{eq:Y rd2}
 Y=\left[\begin{matrix}{\mc Y}_1+ A_{11}^{-1} A_{12}{\mc Y}_2  A_{12}^T  A_{11}^{-T} & - A_{11}^{-1} A_{12}{\mc Y}_2 & 0 \\ -{\mc Y}_2  A_{12}^T A_{11}^{-T} & {\mc Y}_2 & 0\\ 0 & 0 & I \end{matrix}\right],
\end{equation}
where $ {\mc Y}_1=diag(y_1^a I, {\mc Y}_1^b)$ with $y_1^a>0$ being a scalar. The matrix $ Y$ is block diagonal and the first $(m+p)\times (m+p)$ diagonal block is positive definite because $ {\mc Y}_2>0$ and the Schur complement of ${\mc Y}_2$ of the first diagonal block of (\ref{eq:Y rd2}) is $ {\mc Y}_1>0$. The other diagonal block is $I>0$. Therefore, $ Y>0$. For Condition 1 in Lemma \ref{lemma:NI}, we have
\begin{align*}
	\det{(A)}
	=& -\det{(\left[\begin{matrix} A_{11}& A_{12}& A_{13}\\  K_1&  K_2&  K_3\\ 0&0&I\end{matrix}\right])}\notag \\
	=&-\det{(\left[\begin{matrix} A_{11}& A_{12}\\  K_1&  K_2 \end{matrix}\right])}\notag\\
	=&-\det{(A_{11})}\det{(K_2-K_1A_{11}^{-1}A_{12})}\notag\\
	=&\det{(A_{11})}\det{(\mc Y_2^{-1})}\notag\\
	\neq & \ 0,
\end{align*}
where the last equality uses (\ref{eq:rd2 K_2 definition}). Also, there is no input feedthrough term in the output equation (\ref{eq:linearised4d}), the second equality in Condition 1 of Lemma \ref{lemma:NI} is also satisfied. For Condition 2 in Lemma \ref{lemma:NI}, we have
\begin{equation*}
	 A Y =\left[\begin{matrix} y_1^a A_{11}^a & 0 & 0 &  A_{13}^a\\ 0 &  A_{11}^b{\mc Y}_1^b &0 & A_{13}^b \\ 0 & 0 & 0 & I\\ -( A_{13}^a)^T & -( A_{13}^b)^T+ E{\mc H}_b & -I &  K_3\end{matrix}\right].
\end{equation*}
Therefore, $ A  Y  C^T=- B$ and
\begin{equation*}
	 A Y+ Y  A^T =\left[\begin{matrix}0 & 0 & 0 & 0\\ 0 & - Q_b &0 & {\mc H}_b^T E \\ 0 & 0 & 0 & 0\\ 0 &  E{\mc H}_b & 0 &  K_3+ K_3^T\end{matrix}\right].
\end{equation*}
For the matrix $\left[\begin{matrix} Q_b & -{\mc H}_b^T E \\ -E{\mc H}_b &  -(K_3+ K_3^T)\end{matrix}\right]$, we have $-(K_3+ K_3^T)>0$ and the Schur complement of the block $-(K_3+ K_3^T)>0$ is
\begin{equation*}
Q_b - \mc H_b^T E(-(K_3+ K_3^T))^{-1}E\mc H_b=Q_b -\mc H_b^T \mc H_b \geq 0,
\end{equation*}
where (\ref{eq:E}) and (\ref{eq:S_2}) are also used. Therefore, $ A Y+ Y A^T\leq 0$. Condition 2 in Lemma \ref{lemma:NI} is also satisfied. Hence, the system (\ref{eq:linearised4}) is an NI system.

\textbf{Necessity}. Because the realisation $(A,B,C)$ is minimal and the system (\ref{eq:linearised4}) is NI, then according to the proof of Lemma \ref{lemma:NI} (see Lemma 7 in \cite{xiong2010negative}), there exists $ X= X^T > 0$ such that
\begin{equation*}
	\left[\begin{matrix}  X A+ A^T X &  X B- A^T C^T \\  B^T X- C A & -( C B+ B^T C^T)\end{matrix}\right] \leq 0.
\end{equation*}
This implies that for any $ z$, $x_1$, $x_2$ and $ v$, we have
\begin{equation}\label{eq:rd2 LMI z x v}
	\left[\begin{matrix}  z \\ x_1 \\ x_2 \\  v\end{matrix}\right]^T \left[\begin{matrix}  X A+ A^T X &  X B- A^T C^T \\  B^T X- C A & -( C B+ B^T C^T)\end{matrix}\right]\left[\begin{matrix}  z \\ x_1 \\ x_2 \\  v\end{matrix}\right] \leq 0.
\end{equation}
Let $\small  X=\left[\begin{matrix}  X_{11} &  X_{12} &  X_{13}\\  X_{12}^T &  X_{22} &  X_{23}\\  X_{13}^T &  X_{23}^T &  X_{33}\end{matrix}\right]$ and substitute the values of $ A$, $ B$ and $ C$ into (\ref{eq:rd2 LMI z x v}). Also, take  $x_1=0$, $x_2=0$ and $ v=- K_1 z$. We have
\begin{equation*}
	 z^T( X_{11} A_{11}+ A_{11}^T X_{11}) z \leq 0
\end{equation*}
for any $ z$, which implies that $ X_{11} A_{11}+ A_{11}^T X_{11}\leq 0$. Since $ X= X^T>0$, we have $ X_{11}>0$. Also, considering that $\det( A_{11})\neq 0$, according to Lemma \ref{thm:marginally stable}, $ A_{11}$ is Lyapunov stable. Also, the controllability of $( A, B)$ implies the controllability of $( A_{11}, A_{11} A_{13}+ A_{12})$. This completes the proof.
\end{proof}

\begin{remark}
The inverse of the matrix $ Y$ in (\ref{eq:Y rd2}) is
\begin{equation*}
 P= Y^{-1}=\left[\begin{matrix} {\mc P}_1 & {\mc P}_1  A_{11}^{-1} A_{12} & 0 \\  A_{12}^T A_{11}^{-T}{\mc P}_1 & {\mc P}_2 +  A_{12}^T A_{11}^{-T}{\mc P}_1  A_{11}^{-1} A_{12} & 0\\
0 & 0& I\end{matrix}\right],	
\end{equation*}
where ${\mc P}_1= {\mc Y}_1^{-1}>0$ and ${\mc P}_2={\mc Y}_2^{-1}>0$. This enables us to define a storage function for the system (\ref{eq:linearised4}) as
\begin{equation*}
V_2( z,x_1,x_2)=\frac{1}{2}	\left[\begin{matrix}  z^T & x_1^T & x_2^T\end{matrix}\right] P\left[\begin{matrix}  z \\ x_1 \\ x_2\end{matrix}\right],
\end{equation*}
which satisfies
\begin{equation*}
\dot V_2( z,x_1,x_2) \leq  v^T \dot y.
\end{equation*}
\end{remark}

\begin{remark}
In the special case of $ m=0$, the system (\ref{eq:linearised3}) is
\begin{subequations}\label{eq:rd2 no z}
\begin{align}
	\dot x_1 =&\ x_2,\\
	\dot x_2=&\  A_{32}x_1+  A_{33}x_2+\mc C \mc A \mc B u,\\
	y=&\ x_1.
\end{align}	
\end{subequations}
Let the input be
\begin{equation*}
u=(\mc C \mc A \mc B)^{-1}\left( v_0+(K_{01}-A_{32})x_1+ (K_{02}-A_{33}) x_2\right).
\end{equation*}
Then, the system (\ref{eq:rd2 no z}) becomes
\begin{subequations}\label{eq:rd2 no z final}
\begin{align}
	\dot x_1 =&\ x_2,\\
	\dot x_2 =&\ K_{01}x_1+K_{02}x_2+v_0,\\
	y=&\ x_1.
\end{align}
\end{subequations}
Choose $K_{01}$ to be such that $K_{01}=K_{01}^T<0$ and choose $K_{02}$ to be such that $K_{02}+K_{02}^T<0$. The matrix $\small Y_0=\left[\begin{matrix} -K_{01}^{-1} & 0 \\ 0 & I\end{matrix}\right]$ then satisfies the conditions in (\ref{eq:NI condition}), while the other conditions in Lemma \ref{lemma:NI} are all satisfied. Therefore, there exist matrices $K_{01}$ and $K_{02}$ such that he system (\ref{eq:rd2 no z final}) is an NI system.
\end{remark}

\subsection{Main Theorem}
To summarize the NI state feedback equivalence results for the relative degree one and two cases, we recall the following terminologies (see \cite{khalil2002nonlinear,isidori2013nonlinear}).

The systems (\ref{eq:linearised1}) and (\ref{eq:linearised3}) are said to be the \emph{normal forms} of the system (\ref{eq:original state-space}) in the relative degree one and the relative degree two cases, respectively. For these two cases, the dynamics described in (\ref{eq:linearised1a}) and (\ref{eq:linearised3a}) are not controlled by the input $u$ directly or through chains of integrators, and are called the \emph{internal dynamics}. Setting the other states to be zero in the internal dynamics, we obtain the \emph{zero dynamics}. That is $\dot z = A_{11}z$ for both relative degree one and two cases, with a minor abuse of notation. We now give the definitions of the weakly minimum phase property.

\begin{definition}(Weakly Minimum Phase)\cite{saberi1990global,byrnes1991passivity}
The system (\ref{eq:original state-space}) is said to be weakly minimum phase if its zero dynamics is Lyapunov stable.
\end{definition}

We now combine the NI state feedback equivalence result shown in Lemmas \ref{thm:rd1 feedback NI} and \ref{thm:rd2 feedback NI} in the following theorem.

\begin{theorem}\label{thm:main}
Suppose a system with the state-space model (\ref{eq:original state-space}) is of relative degree one or relative degree two and has no zero at the origin. Then it is state feedback equivalent to an NI system if and only if it is controllable and weakly minimum phase.
\end{theorem}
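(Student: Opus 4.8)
The plan is to treat the relative degree one and relative degree two cases separately and, in each, reduce the statement to the corresponding feedback-equivalence lemma (Lemma \ref{thm:rd1 feedback NI} and Lemma \ref{thm:rd2 feedback NI}) by matching the three coordinate-free hypotheses of the theorem---no zero at the origin, controllability, and the weakly minimum phase property---to the algebraic conditions appearing in those lemmas. The bridge between the two formulations is provided by the normal-form transformations of Lemmas \ref{lemma:rd1 state transformation} and \ref{lemma:rd2 state transformation}, which bring (\ref{eq:original state-space}) to the forms (\ref{eq:linearised1}) and (\ref{eq:linearised3}) without altering any invariant system-theoretic property.

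First I would fix the relative degree one case. Applying Lemma \ref{lemma:rd1 state transformation} puts the system in the normal form (\ref{eq:linearised1}), whose zero dynamics is $\dot z = A_{11}z$, so that the invariant zeros of the system coincide with $spec(A_{11})$. Consequently the hypothesis that the system has no zero at the origin is exactly $\det(A_{11}) \neq 0$, which is precisely the standing assumption of Lemma \ref{thm:rd1 feedback NI}. Next I would observe that the input substitution leading from (\ref{eq:linearised1}) to (\ref{eq:linearised2}) is an admissible state feedback $u = K_x x + K_v v$ with $K_v = (\mc C \mc B)^{-1}$ invertible (well defined since $\det(\mc C \mc B)\neq 0$), so that the closed-loop family parametrized by $(K_1,K_2)$ is, up to the fixed state transformation, exactly the family of systems reachable from (\ref{eq:original state-space}) by admissible feedback. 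Hence (\ref{eq:original state-space}) is state feedback equivalent to a minimal NI system if and only if there exist $K_1,K_2$ rendering (\ref{eq:linearised2}) minimal and NI, and Lemma \ref{thm:rd1 feedback NI} characterizes the latter as controllability of $(A_{11},A_{12})$ together with Lyapunov stability of $A_{11}$.

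It then remains to re-express these two conditions in coordinate-free terms. Since the normal-form change of coordinates is a similarity transformation and the input substitution is the composition of a state feedback $\mc A \mapsto \mc A + \mc B K_x$ with an invertible input scaling $\mc B \mapsto \mc B K_v$, both of which preserve controllability, the controllability of the original pair $(\mc A,\mc B)$ is equivalent to that of $(A,B)$ in (\ref{eq:rd1 A}) and (\ref{eq:rd1 B}); and the first part of the proof of Lemma \ref{thm:rd1 feedback NI} shows this in turn is equivalent to controllability of $(A_{11},A_{12})$, independently of the choice of gains. Likewise, the weakly minimum phase property is by definition Lyapunov stability of the zero dynamics $\dot z = A_{11}z$, i.e. Lyapunov stability of $A_{11}$. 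This closes the relative degree one case, and the relative degree two case is handled identically, now invoking Lemma \ref{lemma:rd2 state transformation}, the substitution (\ref{eq:input}) with $K_v = (\mc C \mc A \mc B)^{-1}$, and Lemma \ref{thm:rd2 feedback NI}, whose controllability condition on $(A_{11},A_{11}A_{13}+A_{12})$ is again equated to controllability of $(\mc A,\mc B)$ through the same preservation arguments.

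I expect the main obstacle to be the careful bookkeeping of invariance rather than any single hard estimate: one must verify that ``no zero at the origin'' really does correspond to $\det(A_{11})\neq 0$, that is, that the invariant zeros are the eigenvalues of the zero-dynamics matrix in the normal form, and that controllability is genuinely preserved under the combined state transformation, state feedback, and input scaling, so that the lemma-level controllability hypotheses can be traded for controllability of $(\mc A,\mc B)$ uniformly over the admissible feedback gains. Once these two invariance facts are in place, the theorem follows by concatenating Lemmas \ref{thm:rd1 feedback NI} and \ref{thm:rd2 feedback NI} with the definitions of the weakly minimum phase property and of state feedback equivalence.
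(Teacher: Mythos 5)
Your proposal is correct and follows essentially the same route as the paper's own proof: pass to the normal forms (\ref{eq:linearised1}) and (\ref{eq:linearised3}), identify ``no zero at the origin'' with $\det(A_{11})\neq 0$, the weakly minimum phase property with Lyapunov stability of $A_{11}$, and controllability of the original realization with controllability of $(A_{11},A_{12})$ (resp.\ $(A_{11},A_{11}A_{13}+A_{12})$), and then invoke Lemmas \ref{thm:rd1 feedback NI} and \ref{thm:rd2 feedback NI}. Your write-up is somewhat more explicit than the paper's about why controllability is preserved under the state transformation, feedback, and invertible input scaling, but the underlying argument is the same.
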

\begin{proof}
Via the use of state transformations, the controllability of the realization (\ref{eq:original state-space}) is retained in the realizations (\ref{eq:linearised1}) or (\ref{eq:linearised3}). As can be proved similarly to the proofs of Lemmas \ref{thm:rd1 feedback NI} and \ref{thm:rd2 feedback NI}, the controllabilities of the systems (\ref{eq:linearised1}) and (\ref{eq:linearised3}) are equivalent to those of the pairs $(A_{11},A_{12})$ and $( A_{11}, A_{11} A_{13}+ A_{12})$, respectively, for the relative degree one and two cases.
Because the system (\ref{eq:original state-space}) has no zero at the origin, then $\det(A_{11})\neq 0$ for both relative degree one and two cases. The rest of the proof follows directly from Lemmas \ref{thm:rd1 feedback NI} and \ref{thm:rd2 feedback NI}.
\end{proof}

\section{STATE FEEDBACK EQUIVALENCE TO A STRONGLY STRICT NEGATIVE IMAGINARY SYSTEM}\label{section:SSNI}

We now consider necessary and sufficient conditions under which the system (\ref{eq:original state-space}) is state feedback equivalent to an SSNI system. State feedback equivalence to an SSNI system is defined as follows:

\begin{definition}
The system (\ref{eq:original state-space}) is said to be state feedback equivalent to an SSNI system if there exists a state feedback control law
\begin{equation*}
u=K_x x+ K_v v	
\end{equation*}
such that the closed-loop system with the new input $v\in \mathbb R^p$ is an SSNI system.
\end{definition}

Similarly, we first consider the existence of state feedback matrices for systems of relative degree one and two in the normal forms.

\begin{lemma}\label{thm:rd1 SSNI}
Suppose the system (\ref{eq:linearised2}) has $(A_{11},A_{12})$ controllable. Then the following statements are equivalent:

1. $A_{11}$ is Hurwitz;

2. There exist $K_1$ and $K_2$ such that the system (\ref{eq:linearised2}) is an SSNI system with realisation $( A,B,C)$, where $  A$ is Hurwitz, and the transfer function $R(s):=C(sI-  A)^{-1}B$ is such that $R(s)+R(-s)^T$ has full normal rank.
\end{lemma}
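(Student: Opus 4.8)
The plan is to prove the equivalence $(1) \Leftrightarrow (2)$ by showing that the Hurwitz property of $A_{11}$ is exactly what upgrades the NI construction of Lemma \ref{thm:rd1 feedback NI} into an SSNI construction via Lemma \ref{thm:SSNI}. First I would prove $(1) \Rightarrow (2)$ constructively. When $A_{11}$ is Hurwitz, the decomposition $A_{11}=\mathrm{diag}(A_{11}^a,A_{11}^b)$ in (\ref{eq:A_11a A_11b definition}) has the purely-imaginary block $A_{11}^a$ absent (i.e.\ $m_a=0$), so $A_{11}=A_{11}^b$ is the full internal dynamics and there exists $\mc Y_1=\mc Y_1^T>0$ with $A_{11}\mc Y_1+\mc Y_1 A_{11}^T=-Q$ for some $Q=Q^T>0$. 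I would then reuse the formulas for $K_1$ and $K_2$ from the proof of Lemma \ref{thm:rd1 feedback NI}, choosing $\mc H_b$ so that $(A_{11},K_1)$ is observable (hence $(A,B,C)$ minimal) and choosing $\mc H_b$ strictly inside the interior of $S_1$ so that $Q_b-\tfrac12 \mc H_b^T\mc H_b>0$ strictly. This makes the lower-right Schur complement strict, yielding $AY+YA^T<0$ rather than merely $\leq 0$, which is the SSNI inequality (\ref{eq:SSNI condition}). With $\det(A)\neq 0$ and $D=D^T=0$ as before, Lemma \ref{thm:SSNI} then certifies that $A$ is Hurwitz and $R(s)$ is SSNI, provided I verify the hypotheses of that lemma.

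The key auxiliary step is therefore checking the two standing hypotheses of Lemma \ref{thm:SSNI}: that $(A,B,C,D)$ has no observable uncontrollable modes and that $R(s)+R(-s)^T$ has normal rank $p$. The first follows from minimality of $(A,B,C)$, which the construction already guarantees. For the normal-rank condition I would argue as follows: since $D=0$, $R(s)+R(-s)^T = C(sI-A)^{-1}B + B^T(-sI-A^T)^{-1}C^T$, and using $B=AYC^T$ (equivalently $AYC^T+B=0$) together with $AY+YA^T<0$ one shows $R(s)+R(-s)^T = B^T(-sI-A^T)^{-1}(-AY-YA^T)(sI-A)^{-1}B$. Because $-AY-YA^T>0$ and $(A,B)$ is controllable so that $(sI-A)^{-1}B$ has full column rank $p$ for all $s$, this Hermitian-like form has normal rank exactly $p$. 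This is the step I expect to require the most care, since it is the place where strictness of the Lyapunov inequality is essential and where the full-normal-rank claim must be tied precisely to controllability.

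For the converse $(2) \Rightarrow (1)$, I would argue that if the realisation $(A,B,C)$ is SSNI with $A$ Hurwitz, then $A$ Hurwitz forces its internal dynamics to be Hurwitz as well. Concretely, from the necessity argument in the proof of Lemma \ref{thm:rd1 feedback NI}, the SSNI (hence NI) property gives $X=X^T>0$ with $XA+A^TX\leq 0$, and in fact with strict inequality in the SSNI case, so restricting to the internal coordinates (setting $y=0$, $v=-K_1z$) yields $X_{11}A_{11}+A_{11}^TX_{11}<0$ with $X_{11}>0$, whence by Lemma \ref{thm:Lyapunov} the matrix $A_{11}$ is Hurwitz. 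Alternatively and more directly, since $A$ is Hurwitz, its spectrum lies in the $OLHP$; because $\mathrm{spec}(A_{11})$ appears among the zero-dynamics eigenvalues, which are a subset of the invariant structure reflected in $A$, one concludes $\mathrm{spec}(A_{11})\subset OLHP$, i.e.\ $A_{11}$ is Hurwitz.

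Finally, I would note where the controllability assumption on $(A_{11},A_{12})$ enters: it is used in the forward direction both to guarantee that a suitable $\mc H_b$ rendering $(A_{11},K_1)$ observable exists (exactly as in Lemma \ref{thm:rd1 feedback NI}) and to secure the full-normal-rank condition on $R(s)+R(-s)^T$. The main obstacle, as flagged above, is the clean verification of the normal-rank hypothesis of Lemma \ref{thm:SSNI}; the rest is a strict-inequality refinement of the already-established NI construction. I would close by observing that this whole approach is specific to relative degree one, since the relative degree two construction in (\ref{eq:Y rd2}) contains the block $I$ on the diagonal, forcing a zero block in $AY+YA^T$ and hence precluding a strict Lyapunov inequality, which is why relative degree two systems can never be rendered SSNI by state feedback.
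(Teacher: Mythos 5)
Your forward construction is essentially sound, and it is in fact a legitimate variant of the paper's: with $A_{11}$ Hurwitz the purely imaginary block is absent ($m_a=0$), and taking $\mc H_b$ in the interior of $S_1$ makes the Schur complement $Q_b-\tfrac12\mc H_b^T\mc H_b$ strictly positive, so $AY+YA^T<0$ (the paper instead takes $K_1=-A_{12}^TA_{11}^{-T}\mc Y_1^{-1}$, which zeros the off-diagonal block, and strengthens the Lyapunov inequality for $\mc Y_1$ accordingly). However, the step you yourself flag as critical, the verification that $R(s)+R(-s)^T$ has normal rank $p$, rests on a false identity. The factorization $R(s)+R(-s)^T=B^T(-sI-A^T)^{-1}(-AY-YA^T)(sI-A)^{-1}B$ is the \emph{positive real} identity: it requires the PR coupling $XB=C^T$ (with $X=Y^{-1}$), not the NI coupling $B+AYC^T=0$ that your construction produces. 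Under the NI coupling the correct factorization concerns the imaginary part, $R(s)-R(-s)^T=sC(sI-A)^{-1}(AY+YA^T)(-sI-A^T)^{-1}C^T$, with $C$ (not $B$) on the outside and an extra factor of $s$; it says nothing directly about $R(s)+R(-s)^T$. Concretely, your own construction with $m=p=1$, $A_{11}=-1$, $A_{12}=1$, $\mc Y_1=\mc Y_2=1$, $\mc H_b=0$ gives $K_1=1$, $K_2=-2$, $A=\left[\begin{smallmatrix}-1&1\\1&-2\end{smallmatrix}\right]$, $Y=\left[\begin{smallmatrix}2&1\\1&1\end{smallmatrix}\right]$, $AY=-I$; then $R(s)=\frac{s+1}{s^2+3s+1}$ and $R(s)+R(-s)=\frac{2-4s^2}{s^4-7s^2+1}$, whereas your right-hand side equals $2B^T(-sI-A^T)^{-1}(sI-A)^{-1}B=\frac{4-2s^2}{s^4-7s^2+1}$. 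The repair is much simpler and is what the paper does: since $K_2=K_1A_{11}^{-1}A_{12}-\mc Y_2^{-1}$, one computes $R(0)=\left(K_1A_{11}^{-1}A_{12}-K_2\right)^{-1}=\mc Y_2>0$ (equivalently $R(0)=CYC^T$), so $R(s)+R(-s)^T$ is nonsingular at $s=0$ and hence has full normal rank.

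The necessity direction also has a gap. From $XA+A^TX<0$ you cannot simply ``restrict to the internal coordinates'': the $(1,1)$ block of $XA+A^TX$ is $X_{11}A_{11}+A_{11}^TX_{11}+X_{12}K_1+K_1^TX_{12}^T$, and the cross terms do not vanish. The object whose evaluation at $(z,0,-K_1z)$ does produce $z^T(X_{11}A_{11}+A_{11}^TX_{11})z$ is the $(n+p)$-dimensional LMI matrix from the proof of Lemma \ref{thm:rd1 feedback NI}, but that matrix is \emph{never} negative definite: the paper factors it as $-\left[\begin{smallmatrix}L^T\\W^T\end{smallmatrix}\right]\left[\begin{smallmatrix}L&W\end{smallmatrix}\right]$ with $L=Q^{\frac12}X$, $W=-Q^{\frac12}C^T$, $Q=-(AY+YA^T)>0$, so it has rank at most $n<n+p$. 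Strictness on the chosen subspace therefore requires the paper's characterization of the equality case, namely that equality holds iff $X\left[\begin{smallmatrix}z\\y\end{smallmatrix}\right]=C^Tv$, which with $y=0$, $v=-K_1z$ forces $X_{11}z=0$ and hence $z=0$; only then does $z^T(X_{11}A_{11}+A_{11}^TX_{11})z<0$ follow for all $z\neq 0$. Your ``more direct'' alternative is invalid: $A$ in (\ref{eq:rd1 A}) is not block triangular, so $spec(A_{11})$ is not a subset of $spec(A)$; the eigenvalues of $A_{11}$ are invariant zeros of the closed loop, not poles, and Hurwitzness of $A$ alone says nothing about them (that is precisely why the SSNI certificate, not mere stability, is needed here).
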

\begin{proof}
Let the matrices $A$, $B$ and $C$ be the same as in (\ref{eq:rd1 A}), (\ref{eq:rd1 B}) and (\ref{eq:rd1 C}), respectively. Here, the matrices $K_1$ and $K_2$ in $A$ are different than those defined in Section \ref{section:rd1}.
From the proof of Lemma \ref{thm:rd1 feedback NI}, $( A,B)$ is controllable if and only if $(A_{11},A_{12})$ is controllable. Therefore, there is no observable uncontrollable mode in this system.

\textbf{Sufficiency}. According to Lemma \ref{thm:Lyapunov}, we can always find a matrix ${\mc Y}_1$ such that
\begin{equation*}
A_{11}{\mc Y}_1 + {\mc Y}_1 A_{11}^T + \frac{1}{2}A_{11}^{-1}A_{12}A_{12}^TA_{11}^{-T}<0	
\end{equation*}
is satisfied. In the sequel, we will find a matrix $ K_1$ such that
\begin{equation}\label{eq:rd1 SSNI Lyapunov ineq}
	A_{11}{\mc Y}_1 + {\mc Y}_1 A_{11}^T + \frac{1}{2}({\mc Y}_1 K_1^T+A_{11}^{-1}A_{12})( K_1{\mc Y}_1+A_{12}^TA_{11}^{-T})<0
\end{equation}
is satisfied. One possible choice is $ K_1=-A_{12}^TA_{11}^{-T}{\mc Y}_1^{-1}$, which simplifies (\ref{eq:rd1 SSNI Lyapunov ineq}) to be $A_{11}{\mc Y}_1 + {\mc Y}_1 A_{11}^T<0$. Let $ K_2 =  K_1A_{11}^{-1}A_{12}-{\mc Y}_2^{-1}$, where ${\mc Y}_2 \in \mathbb R^{p\times p}$ can be any symmetric positive definite matrix; i.e., ${\mc Y}_2={\mc Y}_2^T >0$. We apply Lemma \ref{thm:SSNI} in the following to prove that the system (\ref{eq:linearised2}) is an SSNI system. We construct the matrix $ Y$ as follows:
\begin{equation*}\label{eq:hat Y}
 Y=\left[\begin{matrix}{\mc Y}_1 + A_{11}^{-1}A_{12}{\mc Y}_2 A_{12}^TA_{11}^{-T} & -A_{11}^{-1}A_{12}{\mc Y}_2\\-{\mc Y}_2 A_{12}^TA_{11}^{-T} & {\mc Y}_2\end{matrix}\right].
\end{equation*}
We have $ Y>0$ because ${\mc Y}_2>0$ and the Schur complement of the block ${\mc Y}_2$ is $ {\mc Y}_1$, which is positive definite. Now, we have $B+A YC^T=0$ and
\begin{equation*}
 A Y+ Y A^T = \left[\begin{matrix}A_{11}{\mc Y}_1 + {\mc Y}_1 A_{11}^T & {\mc Y}_1  K_1^T+A_{11}^{-1}A_{12} \\  K_1{\mc Y}_1+A_{12}^TA_{11}^{-T} & -2I\end{matrix}\right].
\end{equation*}
We have $-2I<0$ and the Schur complement of the block $2I$ in the matrix $-( A Y+ Y A^T)$ is
\begin{align*}
	(-( A &  Y+ Y  A^T)) /(2I)\notag\\
	=& -A_{11}{\mc Y}_1 - {\mc Y}_1 A_{11}^T\notag\\
	 &- \frac{1}{2}({\mc Y}_1  K_1^T+A_{11}^{-1}A_{12})( K_1{\mc Y}_1+A_{12}^TA_{11}^{-T})>0,
\end{align*}
according to (\ref{eq:rd1 SSNI Lyapunov ineq}). Hence $ A Y+ Y A^T<0$. According to Lemma \ref{thm:Lyapunov}, $ A^T$ is Hurwitz. Therefore $ A$ is Hurwitz. We now prove that $R(s)+R(-s)^T$ has full normal rank. Given $ A$, $B$ and $C$, we have
\begin{align}
	R(s)=&\ C(sI- A)^{-1}B\notag\\
	=&\left[\begin{matrix} 0 & I\end{matrix}\right]\left[\begin{matrix} sI-A_{11} & -A_{12}\\ - K_1 & sI- K_2\end{matrix}\right]^{-1}\left[\begin{matrix} 0\\ I\end{matrix}\right]\notag\\
	=&\left(sI- K_1(sI-A_{11})^{-1}A_{12}- K_2\right)^{-1}.\label{eq:rank arbitrary s}
\end{align}
Take $s=0$ into (\ref{eq:rank arbitrary s}), we have
\begin{equation*}
R(0)=( K_1A_{11}^{-1}A_{12}- K_2)^{-1}={\mc Y}_2>0.
\end{equation*}
Hence $R(s)+R(-s)^T$ has full normal rank. Therefore, according to Lemma \ref{thm:SSNI}, the system (\ref{eq:linearised2}) is SSNI.

\textbf{Necessity}. If $ A$ is Hurwitz, $R(s)+R(-s)^T$ has full normal rank and the system (\ref{eq:linearised2}) is SSNI, then according to Lemma \ref{thm:SSNI}, there exists a matrix $ Y= Y^T>0$ such that $B=- A YC^T$ and $ A Y+ Y A^T<0$.

Let $ X= Y^{-1}$, then $ X= X^T>0$. Let $Q = -( A Y+ Y A^T)$, then we have $Q=Q^T>0$ and $ X A+ A^T X = - X Q  X <0$. Since $B=- A YC^T$, we have $CB+B^TC^T=-C A YC^T-C Y A^TC^T=C QC^T$. Also, $ XB- A^TC^T=- X A X^{-1}C^T- A^TC^T=-( X A+ A^T X) X^{-1}C^T= XQ X X^{-1}C^T= XQC^T$. Since $Q=Q^T>0$, let $H:=Q^{\frac{1}{2}}$. Hence $H=H^T>0$. We have
\begin{align}
	\left[\begin{matrix}  X A+ A^T X &  XB- A^TC^T \\ B^T X-C A & -(CB+B^TC^T)\end{matrix}\right]=& -\left[\begin{matrix} L^T \\ W^T\end{matrix}\right]\left[\begin{matrix} L & W\end{matrix}\right]\notag\\
	\leq &\ 0,\label{eq:rd1 LMI SSNI}
\end{align}
where $L=H  X$ and $W= -H C^T$. (\ref{eq:rd1 LMI SSNI}) implies that for any $z\in \mathbb R^{m}$, $y\in \mathbb R^{p}$ and $v\in \mathbb R^{p}$, we have
\begin{align}
	&\left[\begin{matrix} z^T & y^T & v^T\end{matrix}\right]\left[\begin{matrix}  X A+ A^T X &  XB- A^TC^T \\ B^T X-C A & -(CB+B^TC^T)\end{matrix}\right]\left[\begin{matrix} z \\ y \\ v\end{matrix}\right]\notag\\
	&=\left[\begin{matrix} z^T & y^T & v^T\end{matrix}\right]\left[\begin{matrix} L^T \\ W^T\end{matrix}\right]\left[\begin{matrix} L & W\end{matrix}\right]\left[\begin{matrix} z \\ y \\ v\end{matrix}\right] \leq 0,\label{eq:rd1 LMI z y v SSNI}
\end{align}
where $``="$ holds if and only if $\small\left[\begin{matrix} L & W\end{matrix}\right]\left[\begin{matrix} z \\ y \\ v\end{matrix}\right]=0$. That is $L\left[\begin{matrix} z \\ y\end{matrix}\right]+Wv=0$, which is $\small H\left( X\left[\begin{matrix} z \\ y\end{matrix}\right]-C^Tv\right)=0$. Because $H>0$, this equation holds if and only if
\begin{equation}\label{eq:rd1 z y v eq 0}
	 X\left[\begin{matrix} z \\ y\end{matrix}\right]-C^Tv=0.
\end{equation}
Let $\small  X=\left[\begin{matrix}  X_{11} &  X_{12}\\  X_{12}^T &  X_{22}\end{matrix}\right]$ and choose $y=0$ and $v=- K_1z$. Then (\ref{eq:rd1 z y v eq 0}) becomes
\begin{equation*}
	\left[\begin{matrix}  X_{11} \\  X_{12}^T+ K_1\end{matrix}\right]z = 0,
\end{equation*}
which holds only if $ X_{11}z=0$. Since $ X= X^T>0$, $ X_{11}= X_{11}^T>0$. Hence $ X_{11}z=0\iff z=0$. This implies that with the choice $y=0$ and $v=- K_1z$, strict inequality holds for (\ref{eq:rd1 LMI z y v SSNI}) for all $z\neq 0$. Take (\ref{eq:rd1 A}), (\ref{eq:rd1 B}) and (\ref{eq:rd1 C}) together with $y=0$ and $v=- K_1z$ into (\ref{eq:rd1 LMI z y v SSNI}), we get
\begin{equation*}
z^T( X_{11}A_{11}+A_{11}^T X_{11})z<0	
\end{equation*}
for all $z\neq 0$. This implies that $ X_{11}A_{11}+A_{11}^T X_{11}<0$. Therefore, according to Lemma \ref{thm:Lyapunov}, $A_{11}$ is Hurwitz.
\end{proof}

\begin{remark}\label{remark:rd2 SSNI}
Unlike the relative degree one case, the system (\ref{eq:linearised4}) can not be an SSNI system. Indeed, considering the particular form of $B$ and $C$ as are specified in (\ref{eq:rd2 B}) and (\ref{eq:rd2 C}), the condition $B+AYC^T=0$ in (\ref{eq:SSNI condition}) requires the middle diagonal block of $AY$ be $0$. Therefore, the matrix $AY+YA^T$ can never be sign definite. Hence, the system (\ref{eq:linearised4}) can never be SSNI.
\end{remark}

Therefore, we conclude the SSNI state feedback equivalence result in the following theorem. First we give the definition of the minimum phase property.

\begin{definition}(Minimum Phase)\cite{byrnes1991passivity,khalil2002nonlinear}
The system (\ref{eq:original state-space}) is said to be minimum phase if its zero dynamics is asymptotically stable.
\end{definition}

\begin{theorem}\label{thm:main SSNI}
Consider a system with the state-space model (\ref{eq:original state-space}), suppose it is controllable and has relative degree one. Then following statements are equivalent:

1. The system is minimum phase;

2. The system is state feedback equivalent to an SSNI system with realisation $( A,B,C)$, where $  A$ is Hurwitz, and the transfer function $R(s):=C(sI-  A)^{-1}B$ is such that $R(s)+R(-s)^T$ has full normal rank.
\end{theorem}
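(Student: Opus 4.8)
The plan is to reduce the claim directly to Lemma \ref{thm:rd1 SSNI}, which carries all the analytic content; the role of the theorem is to translate its coordinate-dependent hypotheses back to the intrinsic properties of the original system (\ref{eq:original state-space}). Since the system has relative degree one, Lemma \ref{lemma:rd1 state transformation} supplies a nonsingular state transformation $[z^T\ y^T]^T = Tx$ bringing (\ref{eq:original state-space}) into the normal form (\ref{eq:linearised1}), and the input redefinition preceding (\ref{eq:linearised2}) then yields the form (\ref{eq:linearised2}). Because a state transformation combined with an invertible input redefinition preserves controllability, the controllability of (\ref{eq:original state-space}) is equivalent to that of the realization $(A,B,C)$ of (\ref{eq:linearised2}); and by the eigenvector argument at the start of the proof of Lemma \ref{thm:rd1 feedback NI}, this is in turn equivalent to controllability of the pair $(A_{11},A_{12})$. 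Hence the standing assumption that (\ref{eq:original state-space}) is controllable is precisely the hypothesis required by Lemma \ref{thm:rd1 SSNI}.

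Next I would match the two conditions. The zero dynamics of the relative-degree-one system is $\dot z = A_{11}z$, so by the definition of the minimum phase property the system is minimum phase if and only if $A_{11}$ is asymptotically stable, i.e. $A_{11}$ is Hurwitz, which is exactly statement 1 of Lemma \ref{thm:rd1 SSNI}. I note in passing that $A_{11}$ Hurwitz forces $\det(A_{11})\neq 0$, so the absence of a zero at the origin is automatic here and, unlike in Theorem \ref{thm:main}, need not be assumed separately. Statement 2 of Lemma \ref{thm:rd1 SSNI} then asserts the existence of $K_1,K_2$ rendering (\ref{eq:linearised2}) SSNI with $A$ Hurwitz and $R(s)+R(-s)^T$ of full normal rank, which matches statement 2 of the present theorem once the feedback is carried back to the original coordinates.

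The step requiring the most care, and the one I would write out explicitly, is this transport of the feedback. The map used to reach (\ref{eq:linearised2}) is the affine control law $u = (\mathcal C\mathcal B)^{-1}\big(v + (K_1-A_{21})z + (K_2-A_{22})y\big)$; substituting $[z^T\ y^T]^T = Tx$ rewrites it as $u = K_x x + K_v v$ with $K_v=(\mathcal C\mathcal B)^{-1}$ nonsingular and $K_x$ constant, an admissible law in the sense of the definition of SSNI state feedback equivalence. Since passing between the original and the normal-form descriptions is merely a change of coordinates together with an invertible input redefinition, the closed-loop transfer function is unchanged, and therefore so are the SSNI property, the Hurwitz property of the realization, and the full-normal-rank property of $R(s)+R(-s)^T$. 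With these identifications the equivalence of statements 1 and 2 is immediate from Lemma \ref{thm:rd1 SSNI}, completing the proof.
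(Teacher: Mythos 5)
Your proposal is correct and takes essentially the same route as the paper: both reduce the theorem to Lemma \ref{thm:rd1 SSNI} by passing to the normal form (\ref{eq:linearised2}), identifying controllability of (\ref{eq:original state-space}) with controllability of $(A_{11},A_{12})$, and identifying the minimum phase property with $A_{11}$ being Hurwitz. The details you spell out --- transporting the affine feedback law back to the original coordinates as $u=K_xx+K_vv$ with $K_v=(\mathcal C\mathcal B)^{-1}$, and the invariance of the SSNI, Hurwitz, and full-normal-rank properties under the coordinate change --- are left implicit in the paper's three-line proof, but the underlying argument is the same.
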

\begin{proof}
	The controllability of the system (\ref{eq:original state-space}) implies the controllability of $(A_{11},A_{12})$ in the system (\ref{eq:linearised2}). The system (\ref{eq:original state-space}) is minimum phase if and only if $A_{11}$ in the system (\ref{eq:linearised1}) is Hurwitz. The rest of the proof follows from Lemma \ref{thm:rd1 SSNI}.
\end{proof}

\section{CONTROL OF SYSTEMS WITH SNI UNCERTAINTY}\label{section:synthesis}
\begin{figure}[h!]
\centering
\psfrag{delta}{$\Delta(s)$}
\psfrag{nominal}{\hspace{-0.05cm}Nominal}
\psfrag{plant}{Plant}
\psfrag{controller}{Controller}
\psfrag{closed-loop}{Closed-Loop}
\psfrag{w}{$w$}
\psfrag{x}{$x$}
\psfrag{y}{$y$}
\psfrag{u}{$u$}
\psfrag{R_s}{$R(s)$}
\psfrag{+}{$+$}
\includegraphics[width=8.5cm]{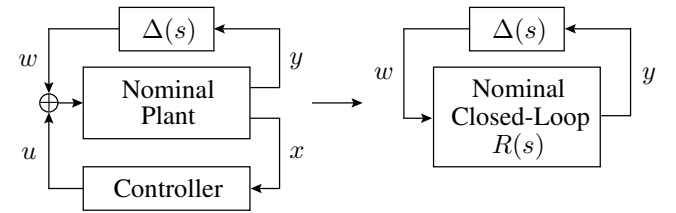}
\caption{A feedback control system. The plant uncertainty $\Delta(s)$ is SNI and satisfies $\lambda_{max}(\Delta(0))\leq \gamma$ and $\Delta(\infty)\geq 0$. Under some assumptions, we can find a controller such that the closed-loop transfer function $R(s)$ is NI with $R(\infty)=0$ and $\lambda_{max}(R(0))< 1/\gamma$. Then the closed-loop system is robust stable.}
\label{fig:controller synthesis}
\end{figure}
Consider the uncertain feedback control system in Fig.~\ref{fig:controller synthesis}. Suppose full state feedback is available. Then Lemmas \ref{thm:rd1 feedback NI} and \ref{thm:rd2 feedback NI} can be used to synthesize a state-feedback controller such that the closed-loop system is NI. In fact, a similar problem is addressed in \cite{petersen2010feedback}, where the existence of such state-feedback controllers relies on the solvability of a series of LMIs. However, in this paper, the LMI assumptions in \cite{petersen2010feedback} are replaced by some simpler assumptions with respect to the controllability and Lyapunov stability of some matrices in the state-space model of the system.

Let us consider the system
\begin{subequations}\label{eq:original uncertain system}
\begin{align}
	\dot x =&\ \mc Ax+\mc B(u+w),\label{eq:uncertain A}\\
	y =&\ \mc C x,\label{eq:uncertain B}\\
	w =&\ \Delta(s)y,
\end{align}	
\end{subequations}
where the uncertainty transfer function $\Delta(s)$ is assumed to be SNI with $\Delta(\infty)\geq 0$ and $\lambda_{max}(\Delta(0))\leq \gamma$ for some constant $\gamma<\infty$. We consider the relative degree one and two cases separately.

\subsection{Relative Degree One Case}
Suppose the system (\ref{eq:original uncertain system}) has relative degree one, that is $\det(\mc C \mc B)\neq 0$. Then without loss of generality, the system (\ref{eq:original uncertain system}) can be considered to be in the SCB (see \cite{sannuti1987special,chen2004linear})
\begin{subequations}\label{eq:rd1 uncertain}
\begin{align}
\dot z =&\ A_{11}z+A_{12}y,\\
\dot y =&\ A_{21}z+A_{22}y+\mc C \mc B(u+w),\\
y=&\ [\begin{matrix}0& {I}\end{matrix}]\left[\begin{matrix}z \\ y\end{matrix}\right],\\
w =&\ \Delta(s)y.
\end{align}
\end{subequations}
With the result in Section \ref{section:rd1}, the following stabilization theorem is obtained.
\begin{lemma}\label{lemma:uncertain rd1}
Suppose the uncertain system (\ref{eq:rd1 uncertain})	 satisfies $\det(A_{11})\neq 0$, $A_{11}$ Lyapunov stable and $(A_{11},A_{12})$ controllable. Then the system (\ref{eq:rd1 uncertain}) can be stabilized by the state-feedback control law
\begin{equation}\label{eq:rd1 stabilizing input}
u=(\mc C\mc B)^{-1}((K_1-A_{21})z+(K_2-A_{22})y),
\end{equation}
where $K_1$ is defined in (\ref{eq:rd1 K_1}) and $K_2$ is defined in (\ref{eq:K_2 definition}) with $\mc Y_2$ also satisfying $\lambda_{max}(\mc Y_2)<\frac{1}{\gamma}$.
\end{lemma}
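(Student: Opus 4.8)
The plan is to reduce the robust stabilization claim to the internal stability criterion for interconnected NI/SNI systems in Lemma \ref{lemma:dc gain theorem}. First I would substitute the control law (\ref{eq:rd1 stabilizing input}), which is precisely the feedback of Section \ref{section:rd1} with the exogenous input set to $v=0$ and with the uncertainty $w$ entering through the same channel $\mc C\mc B$ as $u$. Since the hypotheses $\det(A_{11})\neq 0$, $A_{11}$ Lyapunov stable and $(A_{11},A_{12})$ controllable are exactly those of Lemma \ref{thm:rd1 feedback NI}, with the specified $K_1$ from (\ref{eq:rd1 K_1}) and $K_2$ from (\ref{eq:K_2 definition}) the nominal closed loop is a minimal NI system with realisation $(A,B,C)$ of (\ref{eq:rd1 A})--(\ref{eq:rd1 C}), certified NI by the matrix $Y$ in (\ref{eq:Y}); its transfer function is $R(s)=C(sI-A)^{-1}B$. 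The uncertain system (\ref{eq:rd1 uncertain}) is then the positive feedback interconnection $[R(s),\Delta(s)]$ of this NI system with the SNI uncertainty, as depicted in Fig.~\ref{fig:controller synthesis}.

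Next I would verify the three hypotheses of Lemma \ref{lemma:dc gain theorem}. Because the realisation $(A,B,C)$ has no feedthrough term, $R(s)$ is strictly proper, so $R(\infty)=0$ and hence $R(\infty)\Delta(\infty)=0$ holds automatically; the condition $\Delta(\infty)\geq 0$ is assumed. The only substantive requirement left is the DC loop-gain bound $\lambda_{max}(R(0)\Delta(0))<1$.

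The crux is evaluating $R(0)$ and bounding this spectral radius. Exactly as in the computation leading to (\ref{eq:rank arbitrary s}) in the proof of Lemma \ref{thm:rd1 SSNI}, one has $R(s)=\left(sI-K_1(sI-A_{11})^{-1}A_{12}-K_2\right)^{-1}$, so that $R(0)=(K_1A_{11}^{-1}A_{12}-K_2)^{-1}=\mc Y_2$ by the definition (\ref{eq:K_2 definition}) of $K_2$. Since $\Delta(s)$ is SNI, $\Delta(0)$ is symmetric, and $\lambda_{max}(\Delta(0))\leq\gamma$ gives $\Delta(0)\leq\gamma I$. Because $\mc Y_2>0$, the matrix $R(0)\Delta(0)=\mc Y_2\Delta(0)$ is similar to the symmetric matrix $\mc Y_2^{\frac{1}{2}}\Delta(0)\mc Y_2^{\frac{1}{2}}\leq\gamma\mc Y_2$, whence $\lambda_{max}(R(0)\Delta(0))=\lambda_{max}(\mc Y_2^{\frac{1}{2}}\Delta(0)\mc Y_2^{\frac{1}{2}})\leq\gamma\,\lambda_{max}(\mc Y_2)<\gamma\cdot\frac{1}{\gamma}=1$, where the extra requirement $\lambda_{max}(\mc Y_2)<\frac{1}{\gamma}$ is used. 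Lemma \ref{lemma:dc gain theorem} then yields internal stability of $[R(s),\Delta(s)]$, i.e. the control law stabilises (\ref{eq:rd1 uncertain}).

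I expect the main obstacle to be the bookkeeping that identifies the interconnection correctly, so that Lemma \ref{lemma:dc gain theorem} applies with $R(s)$ as the NI factor, $\Delta(s)$ as the SNI factor, and the relevant DC gain being exactly $R(0)=\mc Y_2$. Since the uncertainty $w$ is injected through the same input channel $\mc C\mc B$ as the control, one must track this channel carefully when matching the closed loop to the NI input port; once $R(0)=\mc Y_2$ is established, the remaining spectral bound is the standard symmetrisation argument and is routine.
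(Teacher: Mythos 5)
Your proposal is correct and follows essentially the same route as the paper's own proof: substitute (\ref{eq:rd1 stabilizing input}), identify the closed loop as the positive feedback interconnection $[R(s),\Delta(s)]$ with $R(s)$ rendered minimal and NI by Lemma \ref{thm:rd1 feedback NI}, verify $R(\infty)\Delta(\infty)=0$, $\Delta(\infty)\geq 0$ and $\lambda_{max}(R(0)\Delta(0))<1$, and conclude internal stability via Lemma \ref{lemma:dc gain theorem}. The only difference is that you fill in steps the paper merely asserts---the computation $R(0)=\left(K_1A_{11}^{-1}A_{12}-K_2\right)^{-1}=\mc Y_2$ borrowed from (\ref{eq:rank arbitrary s}) and the symmetrisation bound $\lambda_{max}(\mc Y_2\Delta(0))=\lambda_{max}\bigl(\mc Y_2^{\frac{1}{2}}\Delta(0)\mc Y_2^{\frac{1}{2}}\bigr)\leq\gamma\,\lambda_{max}(\mc Y_2)<1$---which is added detail, not a different argument.
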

\begin{proof}
With the input (\ref{eq:rd1 stabilizing input}) applied, the system (\ref{eq:rd1 uncertain}) becomes a positive feedback interconnection of the system $w=\Delta(s)y$ and $y=R(s)w$, where $R(s)$ is the transfer function of the state-space model (\ref{eq:linearised2}) with $v$ replaced by $w$. Therefore, according to Lemma \ref{thm:rd1 feedback NI}, the state-feedback matrices $K_1$ and $K_2$ as defined respectively in (\ref{eq:rd1 K_1}) and (\ref{eq:K_2 definition}) make $R(s)$ negative imaginary. Also, we have $R(\infty)=0$ and $R(0)=\mc Y_2$. Because $\lambda_{max}(\mc Y_2)<\frac{1}{\gamma}$ and $\lambda_{max}(\Delta(0))\leq \gamma$, then $\lambda_{max}(R(0)\Delta(0))<1$.

Thus, we have $R(\infty)\Delta(\infty)=0$, $\Delta(\infty)\geq 0$ and $\lambda_{max}(R(0)\Delta(0))<1$. According to Lemma \ref{lemma:dc gain theorem}, the positive feedback interconnection $[R(s),\Delta(s)]$ is internally stable. Therefore, the closed-loop system (\ref{eq:rd1 uncertain}) with input (\ref{eq:rd1 stabilizing input}) is robustly stable.
\end{proof}

\begin{remark}
For the uncertain system (\ref{eq:rd1 uncertain}), if the uncertainty $\Delta(s)$ is NI, then it can be robustly stabilized by applying Lemma \ref{thm:rd1 SSNI} in a similar way to make the nominal closed-loop system SSNI.
\end{remark}

\subsection{Relative Degree Two Case}
Suppose the system (\ref{eq:original uncertain system}) has relative degree two, that is $\mc C \mc B=0$ and $\det(\mc C \mc A \mc B)\neq 0$. Then without loss of generality, the system (\ref{eq:original uncertain system}) can be considered to be in the SCB (see \cite{sannuti1987special,chen2004linear})
\begin{subequations}\label{eq:rd2 uncertain}
\begin{align}
\dot { z} =&\  A_{11} z+ A_{12}x_1+ A_{13}x_2,\\
\dot x_1 =&\ x_2,\\
\dot x_2 =&\  A_{31} z +  A_{32} x_1+ A_{33}x_2+\mc C\mc A\mc B (u+w)\\
y =& \left[\begin{matrix}0& I & 0\end{matrix}\right]\left[\begin{matrix} z \\ x_1 \\ x_2\end{matrix}\right],\\
w=&\ \Delta(s)y.
\end{align}\end{subequations}
With the result in Section \ref{section:rd2}, the following stabilization theorem is obtained.
\begin{lemma}\label{thm:rd2 synthesis}
Suppose the uncertain system (\ref{eq:rd2 uncertain})	 satisfies $\det(A_{11}) \neq 0$, $A_{11}$ Lyapunov stable and $(A_{11},A_{11}A_{13}+A_{12})$ controllable. Then the system (\ref{eq:rd2 uncertain}) can be stabilized by the state-feedback control law
\begin{equation}\label{eq:rd2 stabilizing input}
u=(\mc C\mc A \mc B)^{-1}((K_1-A_{31})z+(K_2-A_{32})x_1+(K_3-A_{33})x_2),
\end{equation}
where $K_1$ is defined in (\ref{eq:rd2 K_1}), $K_3$ is defined in (\ref{eq:rd2 K_3}) and $K_2$ is defined in (\ref{eq:rd2 K_2 definition}) with $\mc Y_2$ also satisfying $\lambda_{max}(\mc Y_2)<\frac{1}{\gamma}$.
\end{lemma}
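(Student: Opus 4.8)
The plan is to mirror the argument used for the relative degree one case in Lemma \ref{lemma:uncertain rd1}, replacing the appeal to Lemma \ref{thm:rd1 feedback NI} by its relative degree two counterpart, Lemma \ref{thm:rd2 feedback NI}. First I would observe that applying the control law (\ref{eq:rd2 stabilizing input}) to the uncertain system (\ref{eq:rd2 uncertain}) exactly cancels the terms $A_{31}$, $A_{32}$ and $A_{33}$ and converts the closed loop into the positive feedback interconnection of the uncertainty $w=\Delta(s)y$ with $y=R(s)w$, where $R(s):=C(sI-A)^{-1}B$ is the transfer function of the feedback-modified plant (\ref{eq:linearised4}) with $v$ replaced by $w$ and $(A,B,C)$ as in (\ref{eq:rd2 A}), (\ref{eq:rd2 B}) and (\ref{eq:rd2 C}). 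Since the hypotheses $\det(A_{11})\neq 0$, $A_{11}$ Lyapunov stable and $(A_{11},A_{11}A_{13}+A_{12})$ controllable are precisely those of Lemma \ref{thm:rd2 feedback NI}, and since $K_1$, $K_2$, $K_3$ are taken to be the matrices constructed in its proof, that lemma guarantees that $R(s)$ is NI with minimal realisation.

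The remaining work is to verify the hypotheses of the internal stability result, Lemma \ref{lemma:dc gain theorem}. Because the realisation $(A,B,C)$ is strictly proper, $R(\infty)=0$, so the conditions $R(\infty)\Delta(\infty)=0$ and $\Delta(\infty)\geq 0$ follow immediately from the standing assumption $\Delta(\infty)\geq 0$. The key computation is the DC gain $R(0)$. Using the identity $B+AYC^T=0$ supplied by Lemma \ref{lemma:NI} for the matrix $Y$ of (\ref{eq:Y rd2}), I would write $R(0)=-CA^{-1}B=CYC^T$, and then read off from the block structure of $Y$ together with $C=\left[\begin{matrix}0 & I & 0\end{matrix}\right]$ that $CYC^T=\mc Y_2$. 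Thus $R(0)=\mc Y_2$, exactly as in the relative degree one case.

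Finally I would combine the DC gain bound. Since $R(0)=\mc Y_2>0$ and $\Delta(0)=\Delta(0)^T$, the product $R(0)\Delta(0)$ is similar to the symmetric matrix $\mc Y_2^{\frac{1}{2}}\Delta(0)\mc Y_2^{\frac{1}{2}}$, so its spectrum is real; applying $\Delta(0)\leq \lambda_{max}(\Delta(0))I$ gives $\mc Y_2^{\frac{1}{2}}\Delta(0)\mc Y_2^{\frac{1}{2}}\leq \lambda_{max}(\Delta(0))\,\mc Y_2$, and the assumptions $\lambda_{max}(\mc Y_2)<1/\gamma$ and $\lambda_{max}(\Delta(0))\leq \gamma$ then yield $\lambda_{max}(R(0)\Delta(0))<1$. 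With all three hypotheses of Lemma \ref{lemma:dc gain theorem} in hand, the interconnection $[R(s),\Delta(s)]$ is internally stable, which is the asserted robust stability of the closed loop (\ref{eq:rd2 uncertain}) under (\ref{eq:rd2 stabilizing input}).

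I do not expect a serious obstacle, as the logical structure is identical to that of Lemma \ref{lemma:uncertain rd1}; the one point requiring genuine care is confirming $R(0)=\mc Y_2$ for the relative degree two realisation, which relies on the vanishing coupling between the $x_2$ block and the remaining blocks in (\ref{eq:Y rd2}). A secondary subtlety is the spectral bound $\lambda_{max}(R(0)\Delta(0))<1$: since $\Delta(0)$ is only symmetric and not necessarily sign definite, the monotonicity step should be read as holding trivially when $\lambda_{max}(\Delta(0))\leq 0$ and by the displayed inequality otherwise.
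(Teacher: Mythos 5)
Your proposal is correct and follows essentially the same route as the paper's proof: cancel the $A_{31},A_{32},A_{33}$ terms with (\ref{eq:rd2 stabilizing input}), invoke Lemma \ref{thm:rd2 feedback NI} to get that $R(s)$ is NI, note $R(\infty)=0$ and $R(0)=\mc Y_2$, and close with Lemma \ref{lemma:dc gain theorem}. Your explicit verifications that $R(0)=-CA^{-1}B=CYC^T=\mc Y_2$ via $B+AYC^T=0$ and that $\lambda_{max}(R(0)\Delta(0))<1$ (including the case $\lambda_{max}(\Delta(0))\leq 0$) merely fill in details the paper states without proof.
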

\begin{proof}
With the input (\ref{eq:rd2 stabilizing input}) applied, the system (\ref{eq:rd2 uncertain}) becomes a positive feedback interconnection of the system $w=\Delta(s)y$ and $y=R(s)w$, where $R(s)$ is the transfer function of the state-space model (\ref{eq:linearised4}) with $v$ replaced by $w$. Therefore, according to Lemma \ref{thm:rd2 feedback NI}, the state-feedback matrices $K_1$, $K_2$ and $K_3$ as defined respectively in (\ref{eq:rd2 K_1}), (\ref{eq:rd2 K_2 definition}) and (\ref{eq:rd2 K_3}) make $R(s)$ negative imaginary. Also, we have $R(\infty)=0$ and $R(0)=\mc Y_2$. Because $\lambda_{max}(\mc Y_2)<\frac{1}{\gamma}$ and $\lambda_{max}(\Delta(0))\leq \gamma$, then $\lambda_{max}(R(0)\Delta(0))<1$.

Thus, we have $R(\infty)\Delta(\infty)=0$, $\Delta(\infty)\geq 0$ and $\lambda_{max}(R(0)\Delta(0))<1$. According to Lemma \ref{lemma:dc gain theorem}, the positive feedback interconnection $[R(s),\Delta(s)]$ is internally stable. Therefore, the closed-loop system (\ref{eq:rd2 uncertain}) with input (\ref{eq:rd2 stabilizing input}) is robustly stable.
\end{proof}

\subsection{Existence of a Stabilizing State Feedback Control Law}
The results in Lemmas \ref{lemma:uncertain rd1} and \ref{thm:rd2 synthesis} are concluded in the following theorem.

\begin{theorem}
Consider the uncertain system (\ref{eq:original uncertain system}), suppose it has relative degree one or two, $(\mc A,\mc B)$ is controllable and the realization $(\mc A,\mc B,\mc C)$ has no zero at the origin. Also, suppose the realization $(\mc A,\mc B,\mc C)$ is weakly minimum phase. Then there always exists state feedback in the form $u=Kx$ that asymptotically stabilizes this system. The formulas for such $K$ is provided in (\ref{eq:rd1 stabilizing input}) and (\ref{eq:rd2 stabilizing input}) for the relative one and two cases, respectively.
\end{theorem}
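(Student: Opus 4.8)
The plan is to reduce this theorem to the two stabilization results already established, namely Lemma \ref{lemma:uncertain rd1} for the relative degree one case and Lemma \ref{thm:rd2 synthesis} for the relative degree two case, by verifying that the structural hypotheses stated here translate exactly into the standing hypotheses of those lemmas. First I would bring the system (\ref{eq:original uncertain system}) into its normal form. Because the disturbance $w$ enters through the same input matrix $\mc B$ as the control $u$, applying the state transformation of Lemma \ref{lemma:rd1 state transformation} (respectively Lemma \ref{lemma:rd2 state transformation}) to the triple $(\mc A,\mc B,\mc C)$ yields exactly the uncertain normal form (\ref{eq:rd1 uncertain}) (respectively (\ref{eq:rd2 uncertain})), with the uncertainty channel left untouched.

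Next I would translate each of the three hypotheses. Since the transmission zeros of $(\mc A,\mc B,\mc C)$ coincide with the eigenvalues of the zero-dynamics matrix $A_{11}$, the assumption that the realization has no zero at the origin gives $\det(A_{11})\neq 0$, exactly as used in the proof of Theorem \ref{thm:main}. The weak minimum phase assumption states that the zero dynamics $\dot z = A_{11}z$ is Lyapunov stable, i.e. $A_{11}$ is Lyapunov stable. Finally, controllability is preserved under the nonsingular state transformation, and, exactly as argued in the proofs of Lemmas \ref{thm:rd1 feedback NI} and \ref{thm:rd2 feedback NI} (and reiterated in Theorem \ref{thm:main}), controllability of $(\mc A,\mc B)$ is equivalent to controllability of $(A_{11},A_{12})$ in the relative degree one case and of $(A_{11},A_{11}A_{13}+A_{12})$ in the relative degree two case. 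These three facts are precisely the hypotheses required by Lemmas \ref{lemma:uncertain rd1} and \ref{thm:rd2 synthesis}.

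With the hypotheses verified, I would invoke the appropriate lemma. In each case the design matrix $\mc Y_2$ must satisfy $\lambda_{max}(\mc Y_2)<1/\gamma$; this is always feasible since $\gamma<\infty$, so one may take for instance $\mc Y_2=\varepsilon I$ with $\varepsilon$ small enough. The lemma then guarantees that the control law (\ref{eq:rd1 stabilizing input}) (respectively (\ref{eq:rd2 stabilizing input})) renders the closed loop robustly stable. The only remaining point is cosmetic: these control laws are written in terms of the transformed coordinates $(z,y)$ or $(z,x_1,x_2)$, which are linear functions $\tilde x = Tx$ of the original state through the nonsingular transformation $T$. Composing the transformed-coordinate feedback $u=\tilde K\tilde x$ with $\tilde x=Tx$ produces a genuine state feedback $u=(\tilde K T)x=:Kx$ in the original coordinates, as claimed.

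I expect no substantial obstacle here, as the argument is essentially a matching of hypotheses together with carrying the coordinate change through. The only points requiring mild care are confirming that ``no zero at the origin'' is correctly read as $\det(A_{11})\neq 0$ and that the uncertainty enters through the same channel as the control so that the transformed uncertain model is exactly (\ref{eq:rd1 uncertain}) or (\ref{eq:rd2 uncertain}); both are immediate from the structure of (\ref{eq:original uncertain system}) and the normal-form derivations.
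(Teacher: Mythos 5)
Your proposal is correct and takes essentially the same approach as the paper: the paper's entire proof is the one-line observation that the theorem follows directly from Lemmas \ref{lemma:uncertain rd1} and \ref{thm:rd2 synthesis}. The details you supply --- reading ``no zero at the origin'' as $\det(A_{11})\neq 0$, weak minimum phase as Lyapunov stability of $A_{11}$, controllability of $(\mc A,\mc B)$ as controllability of $(A_{11},A_{12})$ or $(A_{11},A_{11}A_{13}+A_{12})$, and composing the transformed-coordinate feedback with the nonsingular transformation $T$ to obtain $u=Kx$ --- are exactly the steps the paper leaves implicit, having already established them in Theorem \ref{thm:main} and the normal-form lemmas.
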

\begin{proof}
The proof follows directly from	Lemmas \ref{lemma:uncertain rd1} and \ref{thm:rd2 synthesis}.
\end{proof}

\section{ILLUSTRATIVE EXAMPLE}\label{section:example}
Consider an uncertain system with the state-space model
\begin{subequations}\label{eq:example}
\begin{align}
\dot x =& \left[\begin{matrix}
-1&1&0\\1&-1&1\\0&1&-1	
\end{matrix}
\right]x	+ \left[\begin{matrix}
0\\0\\1	
\end{matrix}
\right] w + \left[\begin{matrix}
0\\0\\1	
\end{matrix}
\right] u,\\
y=& \left[\begin{matrix}
0&1&0	
\end{matrix}
\right]x,\\
w=&\ \Delta(s)y,
\end{align}
\end{subequations}
where the transfer function $\Delta(s)$ is SNI with $\lambda_{max}(\Delta(0))<1$ and $\Delta(\infty)\geq 0$. Let
\begin{equation*}
	\mc A=\left[\begin{matrix}
-1&1&0\\1&-1&1\\0&1&-1	
\end{matrix}
\right], \quad
\mc B=\left[\begin{matrix}
0\\0\\1	
\end{matrix}
\right], \quad
\mc C=\left[\begin{matrix}
0&1&0	
\end{matrix}
\right].
\end{equation*}
The nominal plant with the state-space realisation $(\mc A,\mc B,\mc C)$ is not an NI system because $\mc A$ is unstable. Therefore, we need to apply the proposed state feedback equivalence result to make it NI. We have $\mc C \mc B=0$ and $\mc C \mc A \mc B=1$. Hence, the system (\ref{eq:example}) has relative degree two. With a state transformation
\begin{equation*}
	\left[\begin{matrix}
z\\ x_1\\x_2	
\end{matrix}\right] = Tx, \quad \textnormal{where} \quad T=\left[\begin{matrix}
1 & 0 & 0\\ 0 & 1 & 0\\1 & -1 & 1	
\end{matrix}\right],
\end{equation*}
the system (\ref{eq:example}) becomes
\begin{subequations}\label{eq:example SCB}
\begin{align}
	\dot z=&-z+x_1,\\
	\dot x_1=&\ x_2,\\
	\dot x_2=&\ x_1-2x_2+w+u,\\
	y =&\ x_1,\\
	w=&\ \Delta(s)y.
\end{align}	
\end{subequations}
In comparison to the system (\ref{eq:rd2 uncertain}), here we have $A_{11}=-1$, $A_{12}=1$, $A_{13}=0$, $A_{31}=0$, $A_{32}=1$, $A_{33}=-2$. The assumptions in Lemma \ref{thm:rd2 synthesis} that $A_{11}$ is Lyapunov stable and $(A_{11},A_{11}A_{13}+A_{12})$ is controllable are satisfied. According to (\ref{eq:rd2 K_1}), (\ref{eq:rd2 K_2 definition}) and (\ref{eq:rd2 K_3}), choose the state-feedback matrices to be
\begin{equation*}
K_1= 1,\ K_2= -1-\mc Y_2^{-1}, \ \textnormal{and}\  K_3=-1.
\end{equation*}
Choose $\mc Y_2=0.5<\frac{1}{\lambda_{max}(\Delta(0))}$, then $K_2=-3$. According to (\ref{eq:rd2 stabilizing input}), let
\begin{equation}\label{eq:example control law}
u=z_1-4x_1+x_2.
\end{equation}
Then the system (\ref{eq:example SCB}) becomes
\begin{subequations}\label{eq:example final}
\begin{align}
	\dot z=&\ -z+x_1,\label{eq:example SCB z}\\
	\dot x_1=&\ x_2,\\
	\dot x_2=&\ z_1-3x_1-x_2+w,\\
	y =&\ x_1,\label{eq:example SCB y}\\
	w=&\ \Delta(s)y.
\end{align}	
\end{subequations}
The transfer function of the nominal closed-loop system described by (\ref{eq:example SCB z})-(\ref{eq:example SCB y}) is
\begin{equation*}
R(s)=\frac{s+1}{s^3+2s^2+4s+2},	
\end{equation*}
which has a Bode plot shown in Fig.~\ref{fig:bode}. Since $\angle R(s)\in [-\pi,0]$ for positive frequencies, $R(s)$ is NI. Also, the magnitude of the DC gain of $R(s)$ is less than unity. In fact, $R(0)=\frac{1}{2}$. Therefore, $\lambda_{max}(R(0)\Delta(0))<1$. Because $R(\infty)\Delta(\infty)=0$ and $\Delta(\infty)\geq 0$, the system (\ref{eq:example final}) is asymptotically stable. Thus, the system (\ref{eq:example SCB}) is robustly stabilized by the control law (\ref{eq:example control law}).

\begin{figure}[h!]
\centering
\psfrag{Bode Diagram}{\hspace{-0.5cm}\small Bode Plot of $R(s)$}
\psfrag{Magnitude (dB)}{\hspace{-0.2cm}\footnotesize{Magnitude (dB)}}
\psfrag{Phase (deg)}{\hspace{0.1cm}\footnotesize{Phase (\textdegree)}}
\psfrag{Frequency  (rad/s)}{\hspace{-0.1cm}\footnotesize Frequency (rad/s)}
\psfrag{0.01}{\tiny$10^{-2}$}
\psfrag{0.11}{\tiny$10^{-1}$}
\psfrag{0.12}{\tiny$10^{0}$}
\psfrag{0.13}{\tiny$10^1$}
\psfrag{0.14}{\tiny$10^{2}$}
\psfrag{0}{\tiny$0$}
\psfrag{-60}{\hspace{-0.05cm}\tiny$-60$}
\psfrag{-20}{\hspace{-0.05cm}\tiny$-20$}
\psfrag{-80}{\hspace{-0.05cm}\tiny$-80$}
\psfrag{-40}{\hspace{-0.05cm}\tiny$-40$}
\psfrag{-45}{\hspace{-0.05cm}\tiny$-45$}
\psfrag{-90}{\hspace{-0.05cm}\tiny$-90$}
\psfrag{-135}{\hspace{-0.05cm}\tiny$-135$}
\psfrag{-180}{\hspace{-0.05cm}\tiny$-180$}
\hspace{-0.6cm}
\includegraphics[width=9.2cm]{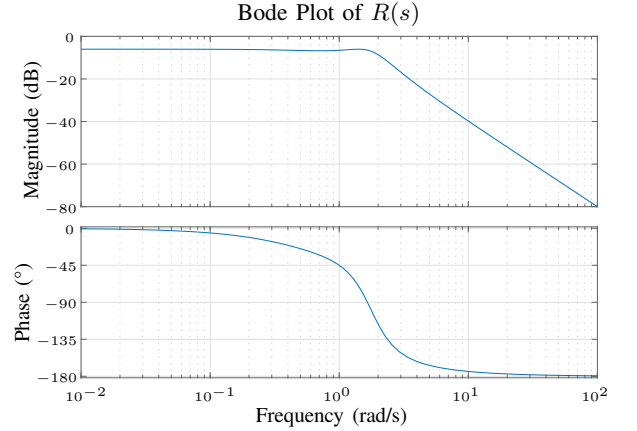}
\caption{Bode plot of the closed-loop transfer function $R(s)$ from the uncertainty input $w$ to the uncertainty output $y$. This closed-loop system is obtained from the system (\ref{eq:example SCB}) using a full state-feedback control law (\ref{eq:example control law}) obtained from Lemma \ref{thm:rd2 synthesis}.}
\label{fig:bode}
\end{figure}

\section{CONCLUSION AND FUTURE WORK}
\label{section:conclusion}
In this paper, we have provided necessary and sufficient conditions under which a system of relative degree one or two is state feedback equivalent to an NI system. As is stated in Theorem \ref{thm:main}, the system (\ref{eq:original state-space}), which is of relative degree one or two and has no zeros at the origin, is state feedback equivalent NI if and only if it is controllable and weakly minimum phase. A similar SSNI feedback equivalence result is presented in Theorem \ref{thm:main SSNI}. The state feedback NI results are then applied to solve the robust stabilization problem for an uncertain system with a specific uncertainty. An example is also provided to illustrate stabilizing process for an uncertain system.

The results of this paper have been recently extended by the authors in \cite{shi2021necessary}. The paper \cite{shi2021necessary} completes the present paper by considering the case when a system has mixed relative degree one and two, while the present paper considers the relative degree one and relative degree two cases separately. It is provided in \cite{shi2021necessary} the necessary and sufficient conditions for a system in the form of (\ref{eq:original state-space}) to be state feedback equivalent to an NI system.

Considering the emergence of the nonlinear negative imaginary systems theory (see \cite{ghallab2018extending,shi2021identical,shi2021free}), it is also worth investigating the feedback equivalence problem for nonlinear systems using the nonlinear NI systems theory. This future state feedback equivalent nonlinear NI research is planned to be a complement of the work done by Byrnes, Isidori and Williems in \cite{byrnes1991passivity}, which investigates the feedback passivity problem for a nonlinear system of relative degree one. It can be also regarded as an extension of the present paper to nonlinear systems.

\bibliographystyle{IEEEtran}

\begin{thebibliography}{10}
\providecommand{\url}[1]{#1}
\csname url@samestyle\endcsname
\providecommand{\newblock}{\relax}
\providecommand{\bibinfo}[2]{#2}
\providecommand{\BIBentrySTDinterwordspacing}{\spaceskip=0pt\relax}
\providecommand{\BIBentryALTinterwordstretchfactor}{4}
\providecommand{\BIBentryALTinterwordspacing}{\spaceskip=\fontdimen2\font plus
\BIBentryALTinterwordstretchfactor\fontdimen3\font minus
  \fontdimen4\font\relax}
\providecommand{\BIBforeignlanguage}[2]{{%
\expandafter\ifx\csname l@#1\endcsname\relax
\typeout{** WARNING: IEEEtran.bst: No hyphenation pattern has been}%
\typeout{** loaded for the language `#1'. Using the pattern for}%
\typeout{** the default language instead.}%
\else
\language=\csname l@#1\endcsname
\fi
#2}}
\providecommand{\BIBdecl}{\relax}
\BIBdecl

\bibitem{lanzon2008stability}
A.~Lanzon and I.~R. Petersen, ``Stability robustness of a feedback
  interconnection of systems with negative imaginary frequency response,''
  \emph{IEEE Transactions on Automatic Control}, vol.~53, no.~4, pp.
  1042--1046, 2008.

\bibitem{petersen2010feedback}
I.~R. Petersen and A.~Lanzon, ``Feedback control of negative-imaginary
  systems,'' \emph{IEEE Control Systems Magazine}, vol.~30, no.~5, pp. 54--72,
  2010.

\bibitem{xiong2010negative}
J.~Xiong, I.~R. Petersen, and A.~Lanzon, ``A negative imaginary lemma and the
  stability of interconnections of linear negative imaginary systems,''
  \emph{IEEE Transactions on Automatic Control}, vol.~55, no.~10, pp.
  2342--2347, 2010.

\bibitem{preumont2018vibration}
A.~Preumont, \emph{Vibration control of active structures: an
  introduction}.\hskip 1em plus 0.5em minus 0.4em\relax Springer, 2018, vol.
  246.

\bibitem{halim2001spatial}
D.~Halim and S.~R. Moheimani, ``Spatial resonant control of flexible
  structures-application to a piezoelectric laminate beam,'' \emph{IEEE
  transactions on control systems technology}, vol.~9, no.~1, pp. 37--53, 2001.

\bibitem{pota2002resonant}
H.~Pota, S.~R. Moheimani, and M.~Smith, ``Resonant controllers for smart
  structures,'' \emph{Smart Materials and Structures}, vol.~11, no.~1, p.~1,
  2002.

\bibitem{cai2010stability}
C.~Cai and G.~Hagen, ``Stability analysis for a string of coupled stable
  subsystems with negative imaginary frequency response,'' \emph{IEEE
  Transactions on Automatic Control}, vol.~55, no.~8, pp. 1958--1963, 2010.

\bibitem{rahman2015design}
M.~A. Rahman, A.~Al~Mamun, K.~Yao, and S.~K. Das, ``Design and implementation
  of feedback resonance compensator in hard disk drive servo system: A mixed
  passivity, negative-imaginary and small-gain approach in discrete time,''
  \emph{Journal of Control, Automation and Electrical Systems}, vol.~26, no.~4,
  pp. 390--402, 2015.

\bibitem{bhikkaji2011negative}
B.~Bhikkaji, S.~R. Moheimani, and I.~R. Petersen, ``A negative imaginary
  approach to modeling and control of a collocated structure,'' \emph{IEEE/ASME
  Transactions on Mechatronics}, vol.~17, no.~4, pp. 717--727, 2011.

\bibitem{mabrok2013spectral}
M.~A. Mabrok, A.~G. Kallapur, I.~R. Petersen, and A.~Lanzon, ``Spectral
  conditions for negative imaginary systems with applications to
  nanopositioning,'' \emph{IEEE/ASME Transactions on Mechatronics}, vol.~19,
  no.~3, pp. 895--903, 2013.

\bibitem{das2014mimo}
S.~K. Das, H.~R. Pota, and I.~R. Petersen, ``A {MIMO} double resonant
  controller design for nanopositioners,'' \emph{IEEE Transactions on
  Nanotechnology}, vol.~14, no.~2, pp. 224--237, 2014.

\bibitem{das2014resonant}
------, ``Resonant controller design for a piezoelectric tube scanner: A mixed
  negative-imaginary and small-gain approach,'' \emph{IEEE Transactions on
  Control Systems Technology}, vol.~22, no.~5, pp. 1899--1906, 2014.

\bibitem{das2015multivariable}
------, ``Multivariable negative-imaginary controller design for damping and
  cross coupling reduction of nanopositioners: a reference model matching
  approach,'' \emph{IEEE/ASME Transactions on Mechatronics}, vol.~20, no.~6,
  pp. 3123--3134, 2015.

\bibitem{brogliato2007dissipative}
B.~Brogliato, R.~Lozano, B.~Maschke, and O.~Egeland, ``Dissipative systems
  analysis and control,'' \emph{Theory and Applications}, vol.~2, 2007.

\bibitem{song2012negative}
Z.~Song, A.~Lanzon, S.~Patra, and I.~R. Petersen, ``A negative-imaginary lemma
  without minimality assumptions and robust state-feedback synthesis for
  uncertain negative-imaginary systems,'' \emph{Systems \& Control Letters},
  vol.~61, no.~12, pp. 1269--1276, 2012.

\bibitem{mabrok2020dissipativity}
M.~A. Mabrok, M.~A. Alyami, and E.~E. Mahmoud, ``On the dissipativity property
  of negative imaginary systems,'' \emph{Alexandria Engineering Journal}, 2020.

\bibitem{kokotovic1989positive}
P.~Kokotovic and H.~Sussmann, ``A positive real condition for global
  stabilization of nonlinear systems,'' \emph{Systems \& Control Letters},
  vol.~13, no.~2, pp. 125--133, 1989.

\bibitem{saberi1990global}
A.~Saberi, P.~Kokotovic, and H.~Sussmann, ``Global stabilization of partially
  linear composite systems,'' \emph{SIAM Journal on Control and Optimization},
  vol.~28, no.~6, pp. 1491--1503, 1990.

\bibitem{byrnes1991passivity}
C.~I. Byrnes, A.~Isidori, J.~C. Willems \emph{et~al.}, ``Passivity, feedback
  equivalence, and the global stabilization of minimum phase nonlinear
  systems,'' \emph{IEEE Transactions on automatic control}, vol.~36, no.~11,
  pp. 1228--1240, 1991.

\bibitem{byrnes1991asymptotic}
C.~I. Byrnes and A.~Isidori, ``Asymptotic stabilization of minimum phase
  nonlinear systems,'' \emph{IEEE Transactions on Automatic Control}, vol.~36,
  no.~10, pp. 1122--1137, 1991.

\bibitem{santosuosso1997passivity}
G.~Santosuosso, ``Passivity of nonlinear systems with input-output
  feedthrough,'' \emph{Automatica}, vol.~33, no.~4, pp. 693--697, 1997.

\bibitem{lin1995feedback}
W.~Lin, ``Feedback stabilization of general nonlinear control systems: a
  passive system approach,'' \emph{Systems \& Control Letters}, vol.~25, no.~1,
  pp. 41--52, 1995.

\bibitem{jiang1996passification}
Z.-P. Jiang, D.~J. Hill, and A.~L. Fradkov, ``A passification approach to
  adaptive nonlinear stabilization,'' \emph{Systems \& Control Letters},
  vol.~28, no.~2, pp. 73--84, 1996.

\bibitem{khalil2002nonlinear}
H.~K. Khalil and J.~W. Grizzle, \emph{Nonlinear systems}.\hskip 1em plus 0.5em
  minus 0.4em\relax Prentice hall Upper Saddle River, NJ, 2002, vol.~3.

\bibitem{isidori2013nonlinear}
A.~Isidori, \emph{Nonlinear control systems}.\hskip 1em plus 0.5em minus
  0.4em\relax Springer Science \& Business Media, 2013.

\bibitem{petersen2016negative}
I.~R. Petersen, ``Negative imaginary systems theory and applications,''
  \emph{Annual Reviews in Control}, vol.~42, pp. 309--318, 2016.

\bibitem{lanzon2011strongly}
A.~Lanzon, S.~Patra, I.~R. Petersen, and Z.~Song, ``A strongly strict
  negative-imaginary lemma for non-minimal linear systems,''
  \emph{Communications in Information and Systems}, vol.~11, no.~2, pp.
  139--142, 2011.

\bibitem{bernstein2009matrix}
D.~S. Bernstein, \emph{Matrix mathematics: theory, facts, and formulas}.\hskip
  1em plus 0.5em minus 0.4em\relax Princeton university press, 2009.

\bibitem{hespanha2018linear}
J.~P. Hespanha, \emph{Linear systems theory}.\hskip 1em plus 0.5em minus
  0.4em\relax Princeton university press, 2018.

\bibitem{sannuti1987special}
P.~Sannuti and A.~Saberi, ``Special coordinate basis for multivariable linear
  systems-finite and infinite zero structure, squaring down and decoupling,''
  \emph{International Journal of Control}, vol.~45, no.~5, pp. 1655--1704,
  1987.

\bibitem{chen2004linear}
B.~M. Chen, Z.~Lin, and Y.~Shamash, \emph{Linear systems theory: a structural
  decomposition approach}.\hskip 1em plus 0.5em minus 0.4em\relax Springer
  Science \& Business Media, 2004.

\bibitem{shi2021necessary}
K.~Shi, I.~R. Petersen, and I.~G. Vladimirov, ``Necessary and sufficient
  conditions for state feedback equivalence to negative imaginary systems,''
  \emph{Submitted to IEEE Transactions on Automatic Control, available as arXiv
  preprint arXiv:2109.11273}, 2021.

\bibitem{ghallab2018extending}
A.~G. Ghallab, M.~A. Mabrok, and I.~R. Petersen, ``Extending negative imaginary
  systems theory to nonlinear systems,'' in \emph{2018 IEEE Conference on
  Decision and Control (CDC)}.\hskip 1em plus 0.5em minus 0.4em\relax IEEE,
  2018, pp. 2348--2353.

\bibitem{shi2021identical}
K.~Shi, I.~G. Vladimirov, and I.~R. Petersen, ``Robust output feedback
  consensus for networked identical nonlinear negative-imaginary systems,''
  \emph{IFAC-PapersOnLine}, vol.~54, no.~9, pp. 239--244, 2021.

\bibitem{shi2021free}
K.~Shi, I.~R. Petersen, and I.~G. Vladimirov, ``Output feedback consensus for
  networked heterogeneous nonlinear negative-imaginary systems with free body
  motion,'' \emph{Submitted to IEEE Transactions on Automatic Control,
  available as arXiv preprint arXiv:2011.14610}, 2021.

\end{thebibliography}

\end{document}